\title{A Parameterized-Complexity Framework for Finding Local Optima} 
\authorrunning{Robert Ganian, Hung P. Hoang,
Christian Komusiewicz,
Nils Morawietz}
\author{Robert Ganian}{Algorithms and Complexity Group, TU Wien, Austria}{rganian@gmail.com}{https://orcid.org/0000-0002-7762-8045}{Austrian Science Foundation (FWF), project 10.55776/Y1329 and Vienna Science and Technology Fund (WWTF), project 10.47379/ICT22029}
\author{Hung P. Hoang}{Algorithms and Complexity Group, TU Wien, Austria}{phoang@ac.tuwien.ac.at}{https://orcid.org/0000-0001-7883-4134}{Austrian Science Foundation (FWF), projects 10.55776/Y1329 and ESP1136425}
\author{Christian Komusiewicz}{Institute of Computer Science, Friedrich Schiller University Jena, Germany}{c.komusiewicz@uni-jena.de}{https://orcid.org/0000-0003-0829-7032}{}
\author{Nils Morawietz}{LaBRI, Université de Bordeaux, France\\
Institute of Computer Science, Friedrich Schiller University Jena, Germany}{nils.morawietz@uni-jena.de}{https://orcid.org/0000-0002-7283-4982}{French ANR, project ANR-22-CE48-0001 (TEMPOGRAL).}
\keywords{Local Search, Parameterized Complexity, PLS} 
\newcommand{\problemtitle}[1]{\gdef\@problemtitle{#1}}\newcommand{\probleminput}[1]{\gdef\@probleminput{#1}}\newcommand{\problemtask}[1]{\gdef\@problemtask{#1}}\NewEnviron{problem}{
\@problemtask\end{tabularx}
\newcommand{\Lup}{L^\uparrow}
\newcommand{\Ldo}{L^\downarrow}
\newcommand{\lup}{\ell^\uparrow}
\newcommand{\ldo}{\ell^\downarrow}
\newcommand{\Xup}{X^\uparrow}
\newcommand{\Xdo}{X^\downarrow}
\newcommand{\bigoh}{\ensuremath{\mathcal{O}}}
\newtheorem{mainresult}{Main Finding}
\newcommand{\PP}{\mathcal{P}}
\newcommand{\I}{\mathcal{I}}
\newcommand{\poly}[1]{{#1}^{\bigoh(1)}}
\newcommand{\Q}{\mathbb{Q}}
\newcommand{\Z}{\mathbb{Z}}
\newcommand{\pls}{\ensuremath{\mathsf{PLS}}\xspace}
\newcommand{\W}{\ensuremath{\mathsf{W}}}
\newcommand{\NP}{\ensuremath{\mathsf{NP}}}
\newcommand{\coNP}{\ensuremath{\mathsf{coNP}}\xspace}
\newcommand{\FPT}{\ensuremath{\mathsf{FPT}}}
\newcommand{\XP}{\ensuremath{\mathsf{XP}}\xspace}
\newcommand{\mr}{\mathcal{R}}
\newcommand{\Oh}{\mathcal{O}}
\newcommand{\wmax}{\omega_{\max}}
\newcommand{\plsredins}{\Phi}
\newcommand{\plsredsol}{\Psi}
\newcommand{\plsprob}[2]{{\textsc{#1/\allowbreak#2}}}
\newcommand{\pivotprob}[2]{\plsprob{#1}{#2}$[\textsc{Pivot}]$}
\newcommand{\sign}{\text{sign}}
\begin{document}

\maketitle

\begin{abstract}
Local search is a fundamental optimization technique that is both widely used in practice and deeply studied in theory, yet its computational complexity remains poorly understood. The traditional frameworks, \pls and the standard algorithm problem, introduced by Johnson, Papadimitriou, and Yannakakis (1988) fail to capture the methodology of local search algorithms: \pls is concerned with finding a local optimum and not with using local search, while the standard algorithm problem restricts each improvement step to follow a fixed pivoting rule.
In this work, we introduce a novel formulation of local search
which provides a middle ground between these models.
In particular, the task is to output not only a local optimum but also a chain of local improvements leading to it.
With this framework, we aim to capture the challenge in designing a good pivoting rule.
Especially, when combined with the parameterized complexity paradigm, it enables both strong lower bounds and meaningful tractability results.
Unlike previous works that combined parameterized complexity with local search, our framework targets the whole task of finding a local optimum and not only a single improvement step. Focusing on two representative meta-problems---\textsc{Subset Weight Optimization Problem} with the $c$-swap neighborhood and \textsc{Weighted Circuit} with the flip neighborhood---we establish fixed-parameter tractability results related to the number of distinct weights, while ruling out an analogous result when parameterized by the distance to the nearest optimum via a new type of reduction. 
\end{abstract}

\section{Introduction}

Local search is one of the most commonly employed paradigms in the design of algorithms for optimization problems. The idea underlying local search algorithms is both natural and simple: start from an initial solution $S$ to the problem and apply a series of local improvement steps until reaching a local optimum, that is, a solution which cannot be improved by any local change. The definition of local improvement steps typically involves straightforward operations such as swapping or moving a constant number of elements in the solution. There are almost always many ways these steps could be applied, and one typically speaks of the \emph{local neighborhood} of a solution $S$ to subsume all possible solutions that can be obtained from $S$ by performing one round of permissible operations. A solution~$S$ is called a \emph{local optimum} if there is no better solution in its local neighborhood, and local search algorithms are typically tasked with finding such a local optimum via a sequence of local improvement steps.

While the local search paradigm often performs very well in practice, currently established complexity-theoretic tools exclude tractability of finding local optima for many problems of interest~\cite{aarts97,MR2450938}.
The central complexity class in the theory of local search is \pls~\cite{JPY1988} 
(for ``\emph{polynomial local search}''). Inclusion in \pls essentially means that one can search the local neighborhood of any given solution in polynomial time\footnote{Formal definitions are provided in Section~\ref{sec:prelims}.}. While inclusion in \pls is a natural prerequisite for efficient local search, it does not guarantee that one can find a local optimum in polynomial time---to the contrary, it is now widely believed that \pls-complete problems do not admit any polynomial-time local search algorithm~\cite{MR2450938}.
In this sense, establishing \pls-hardness for a local search problem can be seen as a counterpart to establishing \textsf{NP}-hardness for decision problems.

Given the above discrepancy between the performance of local search algorithms and inclusion in \pls, researchers have proposed a second classical formalization of local search: the \emph{standard algorithm problem}~\cite{JPY1988}. There, one fixes not only the notion of local neighborhood but also the specific \emph{pivoting rule}, that is, the specific procedure used to compute the next solution in the local improvement step. The underlying task is then to identify the local optimum that will be obtained by exhaustively applying the selected pivoting rule from a provided initial solution $S$. The standard algorithm problem provides a useful way of making complexity-theoretic statements about local search---several standard algorithm problems are known to be \textsf{PSPACE}-complete~\cite{schaffer1991,Papadimitriou92,Schulman00,ChenGVYZ20,MantheyMRS24}---but suffers from the drawback that each such statement applies to a combination of pivoting rule and local search problem. This drawback has already been acknowledged in the seminal work of Schäffer and Yannakakis~\cite[Page 62, 1st paragraph]{schaffer1991}: while formal lower bounds made for the standard algorithm problem only hold for specific pivoting rules (such as, e.g., moving to the first discovered improvement in the local neighborhood), it is more desirable to exclude efficient local search for \emph{any} pivoting rule and this is also what can be inferred from most existing lower bound constructions~\cite{JPY1988,Papadimitriou92,Schulman00,ChenGVYZ20,MantheyMRS24}.

Crucially, regardless of whether one chooses the \pls or standard algorithm formulation, most local search problems remain computationally intractable. 
In this article, we introduce a novel perspective which allows us to establish not only stronger \textbf{lower bounds}, but also \textbf{tractability} for local search problems. We do so by developing a theory of parameterized local search which is inspired by and connected to the well-established \emph{parameterized complexity} paradigm~\cite{DowneyF13,CyganFKLMPPS15}. What distinguishes our work from prior studies that examined parameterized complexity of local search~\cite{MarxS10,Szeider11,KrokhinM12,GuoHNS13,GarvardtGKM23,KomusiewiczM22,GarvardtMNW23} is that our results apply to the local search task as a whole and not only to a single improving step (see the discussion of related work at the end of this section). In fact, our framework targets a wide variety of common settings where computing a single improving step is polynomial-time solvable. Moreover, as a byproduct of our approach we also obtain a formulation of local search problems that lies between \pls and the standard algorithm: one where negative results exclude tractability with respect to every pivoting rule (unlike the latter) while positive ones guarantee efficient local search (unlike the former).

\subparagraph{Contributions.}
We develop our framework on two types of problems arising in the local search setting:
\begin{enumerate}
\item \textsc{Subset Weight Optimization Problem/$c$-Swap}~\cite{KomusiewiczM25}, whose underlying optimization problem generalizes weighted variants of many graph problems such as \textsc{Maximum Independent Set}, \textsc{Max Cut}, \textsc{Minimum Dominating Set}, \textsc{Maximum Vertex Cover}, and others. Here, solutions are vertex or edge subsets satisfying some arbitrary certifiable property and the local neighborhood is defined by swapping up to $c$ vertices or edges for some fixed constant~$c$.
\item \textsc{Weighted Circuit/Flip}, which is the joint generalization of the classical \textsc{Max Circuit/Flip} and \textsc{Min Circuit/Flip}\ problems~\cite{JPY1988,aarts97} 
to arbitrary weights. Here, the input is a circuit along with a weight function over the output gates, solutions are input bit vectors and the local neighborhood is defined by flipping a single bit on the input, that is, inverting its bit.
\end{enumerate}

For all the above problems, we primarily target the following local search task: given an initial solution $S$, output a series of improving steps from $S$ to a local optimum. 
We note that this \emph{pivoting} formulation forms a middle ground that avoids the drawbacks of both previously studied formulations of local search (see~\Cref{fig:comparison} for an illustration): 
\begin{itemize}
\item In the \emph{\pls} formulation, a hypothetical algorithm can find the local optimum in an arbitrary way (without requiring improvement steps and potentially even yielding a solution worse than the given starting solution);
\item In the standard algorithm formulation, one is forced to use a fixed pivoting rule and all obtained complexity-theoretic statements apply only to that specific rule.
\end{itemize}

\begin{figure}
	\centering
	\includegraphics[width=\textwidth]{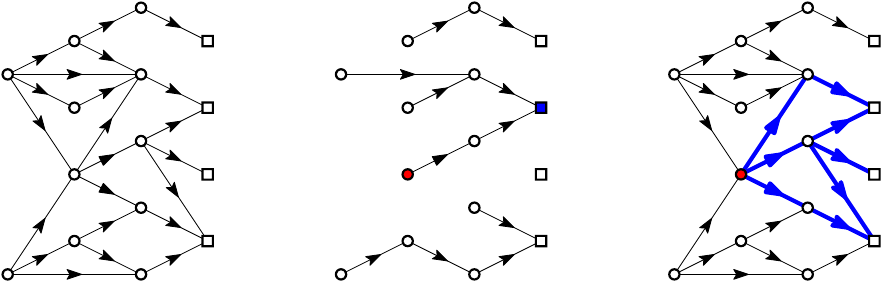}
	\caption{Comparison of the three models. The graphs are transition graphs, where solutions are vertices and local improvements are edges.	
	\textbf{Left:} The \pls formulation asks for a local optimum (i.e., any square in the transition graph) without needing to follow a sequence of improvement steps. \textbf{Middle:} In the standard algorithm problem, we have a specific pivoting rule (i.e., every solution has at most one outgoing edge) and a given initial solution (depicted as the red circle). The task is to find the unique local optimum (blue square) reachable from the initial solution. \textbf{Right:} In our pivoting formulation, given an initial solution (red circle), we want to output an improving sequence from that solution to a local optimum (i.e., any maximal path in the blue subgraph).}
	\label{fig:comparison}
\end{figure}

As a direct consequence of previously established lower bounds~\cite{schaffer1991,MonienT10,abs-2310-19594}, the problems mentioned above cannot admit a polynomial-time algorithm in the pivoting formulation. Moreover, such results are \emph{unconditional}: there simply exist instances with initial solutions $S$ for which the shortest sequence of local improvement steps is exponential (the so-called \emph{all-exp} property~\cite{schaffer1991}), and hence a local search algorithm cannot terminate in polynomial time. However, this does not rule out tractability in the parameterized sense---there, one typically asks for so-called \emph{fixed-parameter tractability}, which means that there is an algorithm terminating in time $f(k)\cdot n^{\bigoh(1)}$ for input size $n$ and some computable function $f$ of a (possibly promised) parameter~$k$.

After setting up the conceptual contributions outlined above, we proceed to our technical contributions: a parameterized analysis of local search problems. We summarize our two main findings below.

\begin{mainresult}
\label{main:positive}
Local search for \textsc{Subset Weight Optimization Problem/$c$-Swap} and \textsc{Weighted Circuit/Flip} is fixed-parameter tractable when parameterized by the number of distinct weights.
\end{mainresult}

\begin{mainresult}
\label{main:negative}
Unless $\FPT = \W[1]$, neither \textsc{Subset Weight Optimization Problem/$c$-Swap} nor \textsc{Weighted Circuit/Flip} admit a fixed-parameter local search algorithm when parameterized by the distance (measured by number of improvement steps) to the nearest local optimum.
\end{mainresult}

Main Finding~\ref{main:positive} holds for all considered formulations of local search (i.e., \pls, pivoting, standard algorithm), and is primarily based on Theorems~\ref{thm:diff_weights}-\ref{thm:diff_weights_circuit}. While these proofs are not technically challenging and rely on the weight reduction technique by Frank and Tardos~\cite{MR905151}, the finding showcases that parameterized complexity can provide meaningful tractability results for local search. 
In particular, it provides a bridge between the polynomial runtime of local search for many unweighted problems and the hardness results of their corresponding weighted variants.

Main Finding~\ref{main:negative} applies only to the newly proposed pivoting formulation of local search. Indeed, on one hand the statement is vacuous in the standard algorithm formulation (if an algorithm cannot choose which improvement step to make, the distance to the local optimum is irrelevant). 
On the other hand, the existence of a \emph{direct} connection between \pls-hardness and a collapse of complexity-theoretic decision classes is a long-standing open problem in the field---and while we do make progress towards establishing such a connection, completely settling this is beyond our current understanding (see also the Concluding Remarks in Section~\ref{sec:concl}).
To the best of our knowledge, Main Finding~\ref{main:negative} is also the first lower bound establishing that the hardness of local search is neither due to the non-existence of a nearby local solution (i.e., the all-exp property)~\cite{PapadimitriouSY90,schaffer1991,MonienDT10,Heimann0H24}
 nor due to the intractability of computing an improvement step~\cite{MarxS10,Szeider11,KrokhinM12,GuoHNS13,GarvardtGKM23,KomusiewiczM22,GarvardtMNW23}: 
 already finding the ``correct'' improvement steps is hard.

Main Finding~\ref{main:negative} is based on \cref{thm:wis_hard} and \cref{cor:wis_hard_circuit}, which are both non-trivial. 
As a starting point towards the former, we show that \plsprob{Maximum Independent Set}{3-Swap}, a special case of \plsprob{Subset Weight Optimization Problem}{$c$-Swap}, has the \emph{all-exp property}.
This means that for infinitely many pairs of an instance and an initial solution, all reachable local optima are exponentially far away from the initial solution; that is, regardless of any pivoting rule, the standard local search algorithm always runs in exponential time.
We establish this property via a tight \pls-reduction from \plsprob{Max Cut}{Flip}.\footnote{Here, the local neighborhood allows to change for any single vertex the side of side of the partition. For example, the partition~$(A,B)$ has the flip-neighbor~$(A\setminus\{v\},B\cup\{v\})$ for each~$v\in A$.}
This notion of reduction is stronger than the original \pls-reduction and can be used to transfer the all-exp property; see \cref{sec:prelims} for formal definitions of these reductions.
While there already exists a \pls-reduction from \plsprob{Max Cut}{Flip} to \plsprob{Maximum Independent Set}{3-Swap}~\cite{KomusiewiczM25}, that reduction is not tight and our adaptations to the known reduction require an involved analysis to show tightness.
Equipped with this all-exp property, we are able to construct a reduction from the canonical \W[1]-hard problem \textsc{Multicolored Independence Set} to \plsprob{Maximum Independent Set}{3-Swap}, thus establishing Main Finding~\ref{main:negative} for the latter problem.

Towards establishing Main Finding~\ref{main:negative} for \plsprob{Weighted Circuit}{Flip}, we do not start from a known \W[1]-hard decision problem (such as \textsc{Multicolored Independence Set}) but instead develop a new notion of reduction which can translate parameterized local search lower bounds directly.
This notion adds to the well-established tight \pls-reduction a constraint that guarantees parametric stability by forbidding the creation of new long paths in the transition graph. Our result here---a concrete application of such a reduction from \plsprob{Maximum Independent Set}{3-Swap} to \plsprob{Max-Circuit}{Flip}---can thus be seen as a demonstration that the new parameterized lower bounds can be translated to pivoting formulations of other local search problems of interest.

\subparagraph{Related Work.}
The \pls\ and standard algorithm formulations have been studied for a broad variety of search problems, including not only those captured by \textsc{Subset Weight Optimization Problem/$c$-Swap} and \textsc{Weighted Circuit/Flip} but also the \textsc{Simplex Method}~\cite{FearnleyS15} and \textsc{Gradient Descent}~\cite{FearnleyGHS23}. The all-exp property (which we newly establish for \plsprob{Weighted Independent Set}{3-Swap}) has previously been conjectured and proven for \textsc{Traveling Salesman/$k$-Opt}~\cite{Krentel89,Heimann0H24} and \plsprob{Max Cut}{Flip}~\cite{MonienT10,KomusiewiczM25}, among others. Whether the \textsc{Simplex Method} admits a pivoting rule that guarantees a polynomial number of iterations is one of the most important open problems in the area of linear programming.

The parameterized complexity of local search was studied for a large number of problems, including: \textsc{Stable Marriage/$k$-Swap}~\cite{MarxS10}, \textsc{Max SAT/$k$-Flip}~\cite{Szeider11}, \textsc{Max CSP/$k$-Flip}~\cite{KrokhinM12}, \textsc{Traveling Salesman} under a variety of local distance measures~\cite{GuoHNS13}, \textsc{Max $c$-Cut/$k$-Flip}~\cite{GarvardtGKM23} and \textsc{Cluster Editing/$k$-Move}, \textsc{Cluster Deletion/$k$-Move}~\cite{GarvardtMNW23}.
In all of the above studies, the parameter measured the complexity of finding a single improving step in the local neighborhood. That perspective is complementary to the one explored in this paper---we target the practically motivated case where performing a single improving step is easy, yet finding a local optimum is difficult. 

\subparagraph{Paper Organization.}
We begin by setting up the basic preliminaries in Section~\ref{sec:prelims} and introducing our framework in Section~\ref{sec:framework}. Then, we proceed with establishing the technically involved lower bounds captured by Main Finding~\ref{main:negative} for \plsprob{Subset Weight Optimization Problem}{$c$-Swap} in Section~\ref{sec:hardness} and for \plsprob{Weighted Circuit}{Flip} in Section~\ref{sec:newred}. The positive results described in Main Finding~\ref{main:positive} are detailed in Section~\ref{sec:fptalgo}, and we conclude with a discussion of possible future directions and open questions in Section~\ref{sec:concl}.

\section{Preliminaries}
\label{sec:prelims}

For two integers $a < b$, we denote by $[a,b]$ the set of integers between (and including) $a$~and~$b$.
We write $[a]$ for $[1,a]$.
Given a set $S$, a subset $S'$ of $S$, and a function $f : S \to \Q$, we define $f(S') := \sum_{s \in S'} f(s)$.
For two sets $A$ and $B$, we denote by $A \oplus B$ their symmetric difference (i.e., $A \oplus B := (A \setminus B) \cup (B \setminus A))$.
Given a graph $G$, the open neighborhood~$N_G(v)$ of a vertex $v$ is the set of all adjacent vertices of $v$ in $G$.
Its closed neighborhood~$N_G[v]$ is defined as $N_G(v) \cup \{v\}$.
We drop the subscript when the graph $G$ is clear.
For a vertex subset~$S \subseteq V(G)$, we denote by~$G[S]$ the subgraph of~$G$ induced by~$S$.

A \emph{local search problem} $P$ is an optimization problem that consists of a set of instances~$D_{P}$, a finite set of (feasible) solutions $F_{P}(I)$ for each instance $I\in D_{P}$, an objective function $f_{P}$ that assigns an integer value to each instance $I\in D_{P}$ and solution $s\in F_P(I)$, and a neighborhood $N_{P}(s,I)\subseteq F_{P}(I)$ for each solution $s\in F_{P}(I)$. 
The size of every solution $s \in F_{P}(I)$ is bounded by a polynomial in the size of $I$. 
The goal is to find a \emph{locally optimal solution} for a given instance $I$; that is, a solution $s \in F_{P}(I)$ such that no solution $s' \in N_{P}(s,I)$ yields a better objective value than $f_P(s,I)$.
Formally, this means that for all $s'\in N_{P}(s,I)$, we have~$f_{P}(s,I)\leq f_{P}(s',I)$ if $P$ is a minimization problem, and $f_{P}(s,I)\geq f_{P}(s',I)$ if $P$ is a maximization problem.

A common naming convention for a local search problem is of the form $Q/N$, where $Q$ is the corresponding optimization problem and $N$ is the notion of neighborhood.
For example, \textsc{Max Cut/Flip} indicates the local search problem with instances, solutions, and objective function as defined by the optimization problem \textsc{Max Cut} with the addition of the neighborhood as defined by the flip operation.

A \emph{standard local search algorithm} for an instance $I$ proceeds as follows.
It starts with some initial solution $s \in F_{P}(I)$.
Then it iteratively visits a neighbor with better objective value, until it reaches a local optimum. 
If a solution has more than one better neighbor, the algorithm has to choose one by some prespecified rule, often referred to as a \emph{pivoting rule}.
The \emph{standard algorithm problem} is then defined as follows~\cite{JPY1988}: Given an instance $I$ and an initial solution $s$, output the local optimum that would be produced by the standard local search algorithm (with a specific pivoting rule) starting from $s$.

Next, we formalize a series of classical notions related to the complexity class \pls.

\begin{definition}[The class~\pls~\cite{JPY1988}]
	A local search problem $P$ is in the class \emph{\pls}, if there are three polynomial time algorithms $A_{P}, \ B_{P}, \ C_{P}$ such that 
	\begin{itemize}
		\item Given an instance $I \in D_P$, $A_{P}$ returns a solution  $s \in F_{P}(I)$;
		\item Given an instance $I \in D_{P}$ and a solution $s\in  F_{P}(I)$, $B_{P}$ computes the objective value $f_{P}(s,I)$ of $s$; and 
		\item Given an instance $I \in D_P$ and a solution $s \in F_{P}(I)$, $C_{P}$ returns a neighbor of $s$ with strictly better objective value, if it exists, and ``locally optimal'', otherwise.
	\end{itemize}
\end{definition}

The ``basic'' reductions used for \pls are defined below.
\begin{definition}[\pls-reduction~\cite{JPY1988}]
\label{def:pls_complete}
	A \emph{\pls-reduction} from a local search problem $P$ to a local search problem $Q$ is a pair of polynomial-time computable functions $\plsredins$ and $\plsredsol$ that satisfy:
	\begin{enumerate}
		\item Given an instance $I \in D_{P}$, $\plsredins$ computes an instance $\plsredins(I) \in D_{Q}$; and
		\item Given an instance $I \in D_{P}$ and a solution $s_q \in F_{Q}(\plsredins(I))$, $\plsredsol$ returns a solution $s_p \in F_{P}(I)$ such that if $s_q$ is a local optimum for $\plsredins(I)$, then $s_p$ is a local optimum for $I$. 
	\end{enumerate} 
	A problem $Q\in \pls$ is \emph{\pls-complete} if for every problem~$P \in \pls$, there exists a \pls-reduction from $P$ to $Q$.
\end{definition}

A more restrictive (and often useful) kind of reduction was defined later, by Sch{\"a}ffer and Yannakakis. Before defining it, we will need the following notion:

\begin{definition}[Transition graph]
	The \emph{transition graph} $T_I$ of an instance $I$ of a local search problem is a directed graph such that the vertices are solutions of $I$, and an edge $(s, s')$ exists if $s'$ is a neighbor of $s$ with a better objective value.
\end{definition}

\begin{definition}[Tight \pls-reduction~\cite{schaffer1991}]
\label{def:tight_pls_complete}
	A \emph{tight \pls-reduction} from a local search problem $P$ to a local search problem $Q$ is a \pls-reduction $(\plsredins,\plsredsol)$ from $P$ to $Q$ such that for every instance $I \in D_P$, we can choose a subset $R$ of $F_Q(\plsredins(I))$ that satisfies:
	\begin{enumerate}
		\item $R$ contains all local optima of $\plsredins(I)$;
		\item For every solution $s_p \in F_{P}(I)$, we can construct in polynomial time a solution $s_q \in R$ so that $\plsredsol(I,s_q) = s_p$;
		\item If there is a path from $s_q \in R$ to $s'_q \in R$ in the transition graph $T_{\plsredins(I)}$ such that there are no other solutions in $R$ on the path, then either there is an edge from the solution $\plsredsol(I,s_q)$ to the solution $\plsredsol(I,s'_q)$ in the transition graph $T_I$ or these two solutions are identical.
	\end{enumerate} 
	A problem $Q\in \pls$ is \emph{tightly \pls-complete} if for every problem~$P \in \pls$, there exists a tight \pls-reduction from $P$ to $Q$.
\end{definition}

Note that tight \pls-reductions preserve the \textsf{PSPACE}-completeness of the standard algorithm problem and also the all-exp property (formalized below)~\cite{schaffer1991}.

\begin{definition}[All-exp property]
    A local search problem~$P$ has the \emph{all-exp} property if there are infinitely many pairs of an instance~$I$ of $D_P$ and an initial solution $s \in F_P(I)$, for which the standard local search algorithm always needs an exponential number of iterations for all possible pivoting rules. 
\end{definition}

\subparagraph{Parameterized Complexity.} 
In parameterized complexity~\cite{DowneyF13,CyganFKLMPPS15}, the complexity of a problem is studied not only with respect to the input size, but also with respect to some problem parameter(s). 
The input to a parameterized problem is a tuple $(\omega,\kappa)$ where $\omega$ is a string in a fixed alphabet and $\kappa$ is a non-negative integer called the \emph{parameter}. Typically, $\kappa$ captures some property of a sought-after solution (e.g., an upper bound on the solution size), or a promised property of $\omega$ (e.g., an upper bound on the treewidth or clique-width of the input graph). A parameterized problem is \emph{fixed-parameter tractable} if all instances where the promised property holds can be solved in time $f(\kappa)\cdot |\omega|^{\bigoh(1)}$ for some computable function $f$. 

\begin{remark}
The above promise-based formulation can be avoided if one assumes that a witness for the structure captured by $\kappa$ is provided on the input~\cite[page 137]{CyganFKLMPPS15}; however, this is unreasonable if we wish to parameterize by the distance to the nearest optimal solution. Alternatively, one may formulate the parameter as a function $\zeta$ of $\omega$~\cite{FlumG06}, and all our results may equivalently be stated in this formulation---but only if one does not place restrictions on the running time required to compute $\zeta$ (see~\cite{Chandoo18} for a discussion of this type of parameterization). 
\end{remark}

A well-established complexity assumption that we use for our lower bounds is that the class $\FPT$ of all fixed-parameter tractable parameterized decision problems is not equal to the class $\W[1]$. In other words, it is considered unlikely for a $\W[1]$-hard problem to be fixed-parameter tractable, and the existence of such an algorithm would---among others---violate the Exponential Time Hypothesis~\cite[Section 14.4]{CyganFKLMPPS15}.

\section{Setting up the Framework}
\label{sec:framework}
We start with the concept of pivoting formulation.
Given an instance $I$ of a local search problem, we define an \emph{improving sequence} for $I$ to be a sequence of solutions $s_0, \dots, s_r$ such that for $i \in [r]$, $s_i$ is a neighbor of $s_{i-1}$ with a better objective value.
An improving sequence is \emph{maximal} if the last solution is locally optimal.

\begin{definition}
	For a local search problem $\PP$, the corresponding pivoting problem $\PP[\textsc{Pivot}]$ is defined as follows: Given an instance $I$ of $\PP$ and an initial solution $s$ for $I$, output a maximal improving sequence for $I$ starting from $s$.
\end{definition}

Observe that for a local search problem, the task is to find a local optimum, without the restriction on the techniques used to achieve it.
For the pivoting problem, we restrict ourselves to using only a local search algorithm (i.e., following a path in the transition graph).
We can view it as finding the right pivoting rule to arrive at a local optimum, and hence the name ``pivoting''.

For example, consider the following local search problem \plsprob{Subset Weight Optimization Problem}{$c$-Swap}: We are given a graph $G=(V,E)$, a weight function $\omega: {V \cup E} \to \Q$, a \emph{certifying function} $\sigma : 2^{V \cup E} \to \{0,1\}$ that certifies whether a subset of vertices and edges form a solution, and an objective function $f : 2^{V \cup E} \to \Q$ with~$f(S) := \sum_{u\in S}\omega(u)$.
Two solutions are neighbors if they differ by a \emph{$c$-swap} (i.e., their symmetric difference is at most~$c$).
The task is to find a locally maximal solution (i.e., a solution $S \subseteq V \cup E$ such that for all neighboring solution $S'$ of $S$, we have $f(S) \geq f(S')$).
The corresponding pivoting problem is then defined as follows.

\begin{problem}
	\problemtitle{\pivotprob{Subset Weight Optimization Problem}{$c$-Swap}}
	\probleminput{A graph $G=(V,E)$, a weight function $\omega: {V \cup E} \to \Q$, a certifying function $\sigma: 2^{V \cup E} \to \{0,1\}$, and an initial solution $S$ (i.e., $\sigma(S) = 1$).}
	\problemtask{Output a maximal improving sequence starting from $S$.}
\end{problem}

We assume that $\omega$ and~$\sigma$ can be computed in time polynomial in $|V|$.
Note that the graph $G$ may be directed and/or a multi-graph.
Further, although it is phrased here as a maximization problem, by reversing the signs of the images of $\omega$, we can also model a minimization problem.
We also assume $c$ to be a constant so that the local search variant is in \pls (i.e., a polynomial time algorithm to compute a better neighbor exists).

\plsprob{Subset Weight Optimization Problem}{$c$-Swap} can be considered as a general problem that takes as special cases many well-known local search problems, such as the following.

\begin{itemize}
	\item \plsprob{Weighted Independent Set}{$c$-Swap} (resp., \plsprob{Weighted Clique}{$c$-Swap}, \plsprob{Weighted Vertex Cover}{$c$-Swap}):  In this case, the graph is vertex-weighted (i.e., the edge weights are zero); and a solution is an independent set (resp., a clique, a vertex cover).
	\item \plsprob{Traveling Salesman}{$k$-opt}: We have an edge-weighted graph, edge sets of spanning cycles are the solutions; and $c = 2k$ (i.e., a $k$-opt step is a $(2k)$-swap).
	\item \plsprob{Max Cut}{Flip} with bounded maximum degree $\Delta$: Here, the graph is edge-weighted (i.e., the vertex weights are zero); a solution is a set of vertices~$U$ together with all edges between~$U$ and~$V\setminus U$; and the swap size is~$c = 2\cdot\Delta + 1$.
\end{itemize}
\begin{remark}
To model exactly the flip-neighborhood of \plsprob{Max Cut}{Flip}, we add for each vertex~$v\in V$ additionally~$\Delta$ isolated vertices~$v_i$ with~$i\in [1,\Delta]$ and enforce that each valid solution has to contain either all or none of $\{v\}\cup \{v_i\mid 1\leq i \leq \Delta\}$ for each vertex~$v\in V$.
This ensures that the swap size of~$c=2\cdot \Delta + 1$ does not allow us to swap more than one vertex together with its associated isolated vertices. 
Thus, two neighboring solutions differ by at most one original vertex, which then allows to model exactly the flips of single vertices in \plsprob{Max Cut}{Flip}.
\end{remark}

The next local search problem we consider is \plsprob{Weighted Circuit}{Flip}, a generalization of the classical \plsprob{Circuit}{Flip}.
In this problem, we are given a Boolean circuit $D$ with $n$ input nodes $x_1, \dots, x_n$ and $m$ output nodes $y_1, \dots, y_m$ and a weight function $\omega: [m] \to \Q$. 
The goal is to find a locally maximal input for the optimization function $f(x_1, \dots, x_n) = \sum^m_{i=1} \omega(i) y_i$ and 1-swap neighborhood (which we also refer to as a \emph{flip}).
The corresponding pivoting problem is as follows.

\begin{problem}
	\problemtitle{\pivotprob{Weighted Circuit}{Flip}}
	\probleminput{A Boolean circuit $D$ with $n$ input nodes $x_1, \dots, x_n$ and $m$ output nodes $y_1, \dots, y_m$, a weight function $\omega: [n] \to \Q$, and an initial solution $s \in \{0,1\}^n$.
}	
	\problemtask{Output a maximal improving sequence starting from $s$.}
\end{problem}

Note that when $\omega(i) = 2^{i-1}$ (i.e., we interpret $f(x_1, \dots, x_n)$ as the number $y_m \dots y_1$ written in binary), then we recover the classical \plsprob{Max-Circuit}{Flip}.
Similarly, if we set $\omega(i) = -2^{i-1}$, then we have the \plsprob{Min-Circuit}{Flip}.
Further, this problem also captures \plsprob{Weighted Max-SAT}{Flip}.

In our parameterized analysis of the above problems, we assume that the input is additionally equipped with an integer parameter~$\kappa$ (see Section~\ref{sec:prelims}).

\section{Hardness results for \plsprob{Weighted Independent Set}{3-Swap}}
\label{sec:hardness}
Recall that the parameter~$\ell$ is the distance of the given starting solution to the nearest local optimum in the transition graph.
Observe that the time complexity of a pivoting problem is lower-bounded by the encoding length of the shortest maximal improving sequence from the initial solution (i.e., $\ell$ times the encoding size of a solution).
Hence, it is natural to consider the parameterized complexity of such a problem with respect to $\ell$.

This section is dedicated to prove the following theorem.
\begin{theorem}
\label{thm:wis_hard}
Unless $\FPT = \W[1]$, there does not exist an algorithm to solve \pivotprob{Weighted Independent Set}{3-Swap} in \FPT-time when parameterized by $\ell$ (i.e., in time $g(\ell) \cdot \poly{n}$ for some computable function $g$).
\end{theorem}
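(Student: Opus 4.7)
\medskip
\noindent\textbf{Proof plan for Theorem~\ref{thm:wis_hard}.} The plan is to reduce from the canonical $\W[1]$-hard problem \textsc{Multicolored Independent Set}: given a graph $H$ with vertices partitioned into $k$ color classes $V_1,\dots,V_k$, decide whether $H$ contains an independent set picking one vertex from each $V_i$. From such an instance I would construct, in polynomial time, an instance $(G,\omega,S_0)$ of \pivotprob{Weighted Independent Set}{3-Swap} together with a parameter value $\ell = g(k)$ for some computable $g$, such that $H$ is a yes-instance if and only if there is a maximal improving 3-swap sequence of length at most $\ell$ starting from $S_0$. An \FPT\ algorithm for the pivoting problem parameterized by $\ell$ would then decide \textsc{Multicolored Independent Set} in time $f(k)\cdot\poly{n}$, contradicting $\FPT\neq\W[1]$, since any output of length more than $\ell$ certifies a no-instance and any shorter output can be verified directly.

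The first ingredient is the all-exp property for \plsprob{Maximum Independent Set}{3-Swap}. I would take the known (non-tight) \pls-reduction from \plsprob{Max Cut}{Flip} to \plsprob{Maximum Independent Set}{3-Swap} of~\cite{KomusiewiczM25} and modify its gadgets so that, for every instance $I$ of \plsprob{Max Cut}{Flip}, one can choose a distinguished set $R$ of independent sets in $\plsredins(I)$ containing all local optima and in bijection with the cuts of $I$, while every maximal directed path in the transition graph between consecutive elements of $R$ is short and corresponds to a single flip in $I$. Verifying Definition~\ref{def:tight_pls_complete} for this modification transfers all-exp from \plsprob{Max Cut}{Flip}~\cite{MonienT10,KomusiewiczM25}: there is an infinite family of instances of \plsprob{Maximum Independent Set}{3-Swap} with starting solutions $S_0^\star$ from which every pivoting rule needs an exponential number of 3-swaps to reach a local optimum.

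The second ingredient is a \emph{switch} gadget that is composed with such an all-exp instance. Given $(H,V_1,\dots,V_k)$ I would build $G$ as the disjoint union of (i) an all-exp instance $(G^\star,\omega^\star,S_0^\star)$ from the family above, and (ii) a selector gadget encoding $(H,V_1,\dots,V_k)$: for each color class $V_i$ add a vertex cluster where picking exactly one vertex corresponds to selecting a candidate from $V_i$, with weights tuned so that local improvements occur only as single-vertex swaps inside a cluster, and edges between clusters mirroring adjacencies in $H$. Finally I would link the two parts through a small number of ``commitment'' vertices with large negative weights, designed so that $S_0$ has two types of outgoing improving 3-swaps: either step into the selector gadget (and then be forced, by the weights, to traverse it to a multicolored independent set in at most $g(k)$ swaps, after which all remaining neighbors strictly decrease the weight), or step into the all-exp gadget (from which escape requires exponentially many swaps). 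The weight scaling guarantees that the selector path exists if and only if $H$ has a multicolored independent set; otherwise every improving sequence from $S_0$ is forced through the all-exp component.

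The main obstacle I anticipate is twofold. First, making the \pls-reduction from \plsprob{Max Cut}{Flip} \emph{tight} in the sense of Definition~\ref{def:tight_pls_complete} requires redesigning the vertex gadgets so that no shortcut cycles through intermediate independent sets can collapse an exponentially long path of flips into a polynomial one; I would need to argue carefully that every directed subpath in $T_{\plsredins(I)}$ between two elements of $R$ projects back to a single flip in $T_I$. Second, and most delicately, the composition of the switch gadget with the all-exp gadget must be proved to preserve the all-exp property on the ``no'' side: I must rule out that the transition graph of the combined instance contains a new short improving path that mixes moves across the two gadgets. I would address this by keeping the two gadgets weight-isolated (using a single scale factor larger than the total weight of the selector) and by verifying, via a case analysis on the possible 3-swaps, that any 3-swap spanning both gadgets is either infeasible or not improving unless the commitment vertices are in a specific configuration reachable only after completing the corresponding side.
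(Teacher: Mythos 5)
Your proposal follows essentially the same route as the paper: reduce from \textsc{Multicolored Independent Set}, using a tight \pls-reduction from \plsprob{Max Cut}{Flip} to establish the all-exp property of \plsprob{Weighted Independent Set}{3-Swap}, and then compose an all-exp instance with a selector gadget linked by a few commitment vertices so that a multicolored independent set enables a short escape to a local optimum while a no-instance forces the improving sequence into the exponential component. This matches the paper's construction (the paper's commitment vertices are $v^*,w^*$ with large positive weights, together with an up-elevator that forces the solution to commit one way or the other) and the paper's two-ingredient structure is exactly what you describe.

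One point worth flagging: your closing sentence that ``any output of length more than $\ell$ certifies a no-instance'' is not quite right as stated, since the pivoting problem permits \emph{any} maximal improving sequence, and in a yes-instance the algorithm is free to output a long one as long as it stays within its time budget. The argument needs the stronger property that in the composed instance \emph{every} maximal improving sequence that avoids the ``good'' local optimum is exponentially long (this is exactly Claim~\ref{claim:wis_hard} in the paper). Then, if the promise holds, any sequence the FPT algorithm returns within $g(k)\cdot\poly{n}$ time cannot be exponential, hence must end at the good optimum; if the algorithm fails to do so, the promise must be violated and the instance is a no-instance. Your own description of the gadget design (selector path terminates, otherwise all paths enter the all-exp part) already has the right intent, but you should make this dichotomy over \emph{all} maximal sequences explicit rather than reasoning about output lengths.
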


Note that since \pivotprob{Weighted Independent Set}{3-Swap} is a special case of \pivotprob{Subset Weight Optimization Problem}{$c$-Swap}, the hardness result above also extends to the latter problem.

\begin{remark}
	\cref{thm:wis_hard} is tight in the sense that for any instance of \pivotprob{Subset Weight Optimization Problem}{$2$-Swap} with all positive weights or all negative weights, the output always has length $\Oh(n^2)$.
	Indeed, suppose that all weights are positive; the argument for negative weights is analogous.
	We label the vertices as $v_1, \dots, v_n$ in the increasing order of their weights.
	For every solution $R$, define its potential $p(R) := \sum_{v_i \in R} i$.
	Then it is easy to see that every 2-swap increases the potential.
	Since the potential only has $\Oh(n^2)$ possible values, every improving sequence must have length $\Oh(n^2)$ as well.
\end{remark}

The proof of \cref{thm:wis_hard} makes use of the following result. 

\begin{theorem}
\label{thm:wis_all_exp}
There is a tight \pls-reduction from \plsprob{Max-Cut}{Flip} with bounded degree to \plsprob{Weighted Independent Set}{3-Swap}. In particular, the latter problem has the all-exp property (even when restricting to positive weights).
\end{theorem}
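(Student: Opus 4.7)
The plan is to start from the existing (non-tight) \pls-reduction of~\cite{KomusiewiczM25} from \plsprob{Max-Cut}{Flip} with bounded degree to \plsprob{Weighted Independent Set}{3-Swap} and strengthen it so that it satisfies the tightness condition of \cref{def:tight_pls_complete}. The target instance $\plsredins(I)$ would contain, for every vertex $v$ of $G$, a pair $v_A, v_B$ of \emph{side} vertices joined by an edge (so at most one of them lies in any independent set), together with, for every edge $e=uv$ of $G$, a small \emph{edge gadget} whose included vertices contribute to the objective precisely when $u$ and $v$ are assigned to opposite sides. The weights on the side vertices would be large enough that every local optimum selects exactly one of $\{v_A,v_B\}$, and a uniform additive shift at the end would make all weights positive without changing the transition graph.

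The subset $R \subseteq F_Q(\plsredins(I))$ required by \cref{def:tight_pls_complete} would consist of the \emph{canonical} solutions: those independent sets that contain exactly one of $\{v_A,v_B\}$ for each $v$ and that, given those choices, select the unique weight-maximizing edge-gadget configuration. The projection $\plsredsol$ simply reads the partition off a canonical solution. Conditions~(1) and~(2) of \cref{def:tight_pls_complete} are essentially immediate: every non-canonical solution admits an improving 1- or 2-swap inside some gadget or side-pair, and every Max-Cut partition has a polynomial-time computable canonical preimage in $R$.

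The main obstacle is condition~(3): controlling what an improving 3-swap path can do between two consecutive $R$-solutions. I would establish it through a rounding/potential argument. Associate with every $s_q$ a partition $\pi(s_q)$ obtained by breaking side ties in a fixed deterministic order, together with a potential $\Phi(s_q)$ that measures the deviation of $s_q$ from the canonical representative of $\pi(s_q)$. By calibrating the gadget weights I would ensure that every improving 3-swap either (i)~strictly decreases $\Phi$ without altering $\pi$, or (ii)~touches exactly one side-pair $\{v_A,v_B\}$ and, within the remaining 3-swap budget, adjusts only edge gadgets incident to $v$. Because $v$ has bounded degree and each incident gadget can be fully repaired by a later single swap, option~(ii) produces a partition that is a single Max-Cut flip away from $\pi(s_q)$. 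Consequently any $R$-to-$R$ improving path that avoids all other $R$-solutions either re-enters $R$ at the same canonical solution or implements exactly one Max-Cut flip, as required.

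The second sentence of the theorem follows immediately: tight \pls-reductions preserve the all-exp property, which is known for \plsprob{Max-Cut}{Flip} with bounded degree~\cite{MonienT10}, and the final additive weight shift preserves both the transition graph and the local optima while ensuring that all vertex weights are strictly positive.
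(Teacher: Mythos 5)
Your high-level plan (strengthen the known reduction of~\cite{KomusiewiczM25} so that condition~(3) of \cref{def:tight_pls_complete} holds, with $R$ the canonical solutions) matches the paper's strategy, but several of the concrete mechanisms you propose are either wrong or too underspecified to carry the tightness argument.

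The most concrete error is the claim that ``a uniform additive shift at the end would make all weights positive without changing the transition graph.'' For \textsc{Weighted Independent Set} this is false: if every vertex weight is shifted by $\delta$, the weight of a solution $S$ changes by $\delta\cdot|S|$, and since a $3$-swap generally changes the cardinality of the solution, the comparison between a solution and its neighbor can flip. Positivity must instead be built into the construction itself (which the paper's gadget weights do directly, using that the Max-Cut weights can be taken positive).

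The bigger gap is the mechanism behind condition~(3). You describe per-vertex side pairs $\{v_A,v_B\}$ joined by a single edge and small edge gadgets, together with a potential $\Phi$ and a claim that each improving $3$-swap either decreases $\Phi$ without changing the partition, or touches one side pair and then gets ``fully repaired by a later single swap.'' This does not address the central difficulty: a single Max-Cut flip of a vertex $v$ of degree $\Delta$ must be simulated by a \emph{chain} of $\Theta(\Delta)$ improving $3$-swaps, and nothing in your setup prevents the transition graph from interleaving this chain with swaps affecting a different vertex $w$, producing a path between two $R$-solutions that implements two flips at once and thus violates condition~(3). The paper's proof resolves this with two specific devices you do not have: (i) six vertices $v_A,v'_A,v''_A,v_B,v'_B,v''_B$ per vertex $v$ arranged as a biclique, so that removing $v_A$ and adding $v_B$ in a single $3$-swap while also adjusting an edge vertex is impossible (the remaining primed copies block it), and (ii) explicit up/down-elevator cliques plus a turn vertex per (vertex, neighborhood-partition) pair, with the union $D$ of all elevator and turn vertices turned into one large clique. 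Because any independent set contains at most one vertex of $D$, only a single flip's ``ramp'' can be active at any time, and the carefully calibrated elevator weights force the improving $3$-swaps along a unique direct sequence between consecutive $R$-solutions. Your potential-based sketch would need to reproduce exactly this kind of ``mutual exclusion'' among the per-vertex simulators; as written it does not, so the key step of the proof is missing.

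The final sentence of the theorem (the all-exp property) does follow exactly as you say once the tight reduction is in place, using the all-exp property of bounded-degree positive-weight \plsprob{Max Cut}{Flip} and the fact that tight \pls-reductions preserve it.
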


\plsprob{Weighted Independent Set}{3-Swap} has been shown to be \pls-complete~\cite{KomusiewiczM25} via a \pls-reduction from \plsprob{Max Cut}{Flip}.
However, this reduction was not argued to be tight, and is in fact unlikely to be.
Here, we adapt it into a tight \pls-reduction; the proof is presented in Section~\ref{subsec:wis_all_exp}.

The constructions in the proofs of Theorems~\ref{thm:wis_hard} and \ref{thm:wis_all_exp} use the concept of (up/down)-elevators, defined as follows.
Here, when we assume a certain (fixed) ordering on a set $X$, we denote by $X[i]$ the $i$th element in this ordering.

Let~$G=(V,E)$ be a vertex-weighted graph and let~$X \subseteq V$ be a set of size at least~$2$.
An~\emph{$X$-elevator} is a clique~$L := \{\ell_1, \dots, \ell_{s}\}$ of size~$s := |X|-1$, such that the vertices of~$L$ have the same neighborhood outside of~$L \cup X$ and for each~$i\in [1,s-1]$, $\ell_i$ has exactly the first~$i+1$ vertices of~$X$ as neighbors inside of~$X$ for an arbitrary but fixed ordering on the vertices of~$X$.
We call the vertices of an elevator~\emph{levels}.
The vertex~$\ell_{|L|}$ is the~\emph{top-level}, while $\ell_1$ is the~\emph{bottom-level}.

We say that~$L$ is an~\emph{up-elevator} if~$\omega(\ell_1) = \omega(X[1]) + \omega(X[2]) + 1$ and for each~$i\in[2,s-1]$, $\omega(\ell_i) = \omega(\ell_{i-1}) + \omega(X[i+1]) + 1$.
Similarly, we say that~$L$ is a~\emph{down-elevator} if~$\omega(\ell_1) = \omega(x_1) + \omega(x_2) -1$ and for each~$i\in[2,s-1]$, $\omega(\ell_i) = \omega(\ell_{i-1}) + \omega(X[i+1]) -1$. 
Note that the weights of the vertices in such an up- or down-elevator are uniquely defined by the weights and the order of the vertices of~$X$.
See~\Cref{elevator examples} for examples of an up-elevator and a down-elevator.

\begin{figure}
\centering
\begin{tikzpicture}
\tikzstyle{knoten}=[circle,fill=white,draw=black,minimum size=5pt,inner sep=0pt]

\draw[rounded corners] (.7,-.5) rectangle (4.3, .5) {};

\node[knoten] (x1) at (1,0) {};
\node[knoten] (x2) at (2,0) {};
\node[knoten] (x3) at (3,0) {};
\node[knoten] (x4) at (4,0) {};

\begin{scope}[yshift=.5cm]

\node[knoten] (l1) at (2,1) {};
\node[knoten] (l2) at (3,1.5) {};
\node[knoten] (l3) at (4,2) {};

\draw[rounded corners] (1.7,.5) rectangle (4.3, 2.5) {};
\end{scope}

\draw[-] (x1) to (l1);
\draw[-] (x2) to (l1);
\draw[-] (x1) to (l2);
\draw[-] (x2) to (l2);
\draw[-] (x1) to (l3);
\draw[-] (x2) to (l3);
\draw[-] (x3) to (l2);
\draw[-] (x3) to (l3);
\draw[-] (x4) to (l3);

\node () at ($(x1)-(0,.3)$) {2};
\node () at ($(x2)-(0,.3)$) {5};
\node () at ($(x3)-(0,.3)$) {9};
\node () at ($(x4)-(0,.3)$) {1};

\node () at ($(l1)+(0,.3)$) {8};
\node () at ($(l2)+(0,.3)$) {18};
\node () at ($(l3)+(0,.3)$) {20};
\node () at ($(l3)-(2,-.25)$) {$L$};
\node () at ($(x1)+(0,.8)$) {$X$};

\begin{scope}[xshift=6cm]

\draw[rounded corners] (.7,-.5) rectangle (4.3, .5) {};

\node[knoten] (x1) at (4,0) {};
\node[knoten] (x2) at (3,0) {};
\node[knoten] (x3) at (2,0) {};
\node[knoten] (x4) at (1,0) {};

\begin{scope}[yshift=.5cm]

\node[knoten] (l1) at (3,1) {};
\node[knoten] (l2) at (2,1.5) {};
\node[knoten] (l3) at (1,2) {};

\draw[rounded corners] (.7,.5) rectangle (3.3, 2.5) {};
\end{scope}

\draw[-] (x1) to (l1);
\draw[-] (x2) to (l1);
\draw[-] (x1) to (l2);
\draw[-] (x2) to (l2);
\draw[-] (x1) to (l3);
\draw[-] (x2) to (l3);
\draw[-] (x3) to (l2);
\draw[-] (x3) to (l3);
\draw[-] (x4) to (l3);

\node () at ($(x1)-(0,.3)$) {3};
\node () at ($(x2)-(0,.3)$) {4};
\node () at ($(x3)-(0,.3)$) {2};
\node () at ($(x4)-(0,.3)$) {9};

\node () at ($(l1)+(0,.3)$) {6};
\node () at ($(l2)+(0,.3)$) {7};
\node () at ($(l3)+(0,.3)$) {15};

\node () at ($(l3)+(2,.25)$) {$L'$};
\node () at ($(x1)+(0,.8)$) {$X'$};

\end{scope}
\end{tikzpicture}
\caption{An example for the elevator gadgets. On the left an $X$-up-elevator~$L$ and on the right an $X'$-down-elevator~$L'$. The edges of the cliques~$L$ and~$L'$ are not depicted. Moreover, the edges between the vertices of~$X$ or the vertices of~$X'$ are arbitrary.
Recall that the vertices of~$L$ (respectively~$L'$) have the same neighborhood outside of~$L\cup X$ (respectively~$L'\cup X'$).}
\label{elevator examples}
\end{figure}
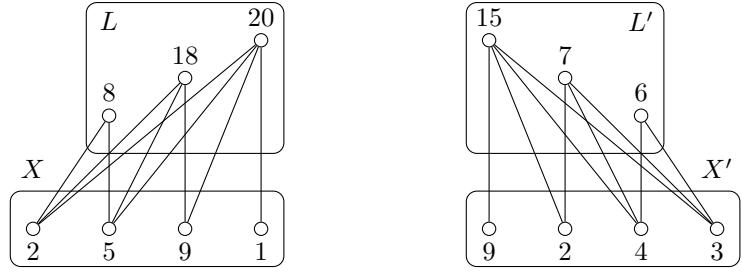

\begin{lemma}
\label{lem:elevator_impr}
Let $(G, \omega)$ be a weighted graph.
Let $L = \{\ell_1, \dots, \ell_s\}$ be an $X$-up-elevator for some vertex set $X$ of $G$, where $\ell_s$ is the top-level.
If an independent set~$S$ contains $\ell_i$ for some $i \neq s$, then $(S \setminus \{\ell_i, X[i+2]\}) \cup \{\ell_{i+1}\}$ is an independent set with a higher weight than that of $S$.
\end{lemma}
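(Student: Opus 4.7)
The plan is to verify two separate claims about the candidate set $S' := (S \setminus \{\ell_i, X[i+2]\}) \cup \{\ell_{i+1}\}$: first that $S'$ is an independent set, and second that $\omega(S') > \omega(S)$. Both parts should follow directly from the structural definition of an up-elevator, so the proof is largely a careful bookkeeping argument rather than a combinatorial one.

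For independence, I would argue it suffices to check that the newly added vertex $\ell_{i+1}$ has no neighbor in $S' \setminus \{\ell_{i+1}\}$, since $S$ was independent and we have only removed vertices besides adding $\ell_{i+1}$. I would partition the neighborhood $N(\ell_{i+1})$ into three parts and handle them in turn: (i) the other levels of $L$, which are excluded because $L$ is a clique and $\ell_i$ was the unique level in $S$, and we removed it; (ii) the neighbors of $\ell_{i+1}$ inside $X$, which by the elevator's definition are exactly $X[1],\dots,X[i+2]$, of which $X[1],\dots,X[i+1]$ are already forbidden from $S$ because they are neighbors of $\ell_i \in S$, and $X[i+2]$ is exactly the vertex we removed; and (iii) the vertices outside $L\cup X$, where by definition $\ell_{i+1}$ has the same neighborhood as $\ell_i$, so none of them lie in $S$ (hence none lie in $S'$).

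For the weight comparison, I would simply plug in the defining recurrence of an up-elevator, namely $\omega(\ell_{i+1}) = \omega(\ell_i) + \omega(X[i+2]) + 1$, so that
\[
\omega(S') - \omega(S) = \omega(\ell_{i+1}) - \omega(\ell_i) - \omega(X[i+2]) = 1 > 0.
\]
This handles all cases $i \in [1,s-1]$ uniformly.

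I do not expect a genuine obstacle here; the only subtlety is to keep the indexing of $X$ straight (in particular, that $\ell_{i+1}$ gains exactly $X[i+2]$ as a new $X$-neighbor over $\ell_i$) and to invoke the shared-external-neighborhood clause of the elevator definition to rule out collisions outside $L\cup X$. Once these bookkeeping points are made explicit, both the independence check and the weight calculation are one-liners.
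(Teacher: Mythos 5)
Your proof is correct and follows essentially the same route as the paper's, which likewise rests on the two facts $N(\ell_{i+1}) = N(\ell_i) \cup \{X[i+2]\}$ and $\omega(\ell_{i+1}) = \omega(\ell_i) + \omega(X[i+2]) + 1$; you simply spell out the bookkeeping that the paper's one-line proof compresses. One tiny caveat worth making explicit: your equality $\omega(S')-\omega(S)=1$ tacitly assumes $X[i+2]\in S$; when $X[i+2]\notin S$ the difference is $\omega(X[i+2])+1$, which is still positive in every context where the paper invokes the lemma (the relevant $X$-weights are always nonnegative), and the paper's own terse argument carries the same unstated assumption.
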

\begin{proof}
	By definition, $\omega(\ell_{i+1}) > \omega(\ell_i) + \omega(X[i+2])$.
	Further, $N(\ell_{i+1}) = N(\ell_{i}) \cup \{X[i+2]\}$.
	The lemma then follows.
\end{proof}

\subsection{Proof of \cref{thm:wis_hard}}

We reduce from the well-known \W[1]-hard \textsc{Multi-colored Independent Set} problem: Given a graph $G=(V_1 \cup \dots \cup V_k, E)$ where~$k \geq 3$ and $V_i$ is a clique for each~$i\in[k]$, is there a independent set of size $k$?
Without loss of generality, we assume $|V_k| = 1$.
Let $V:= V_1 \cup \dots \cup V_k$ and $n := |V|$.

By \cref{thm:wis_all_exp}, there exists an instance~$(U,F,\omega)$ of \textsc{Weighted Independent Set/3-Swap} with~$\Omega(n)$ vertices such that all local optima are exponentially far away from some initial solution~$S\subseteq U$ in the transition graph of~$(U,F,\omega)$; moreover, such an instance can be constructed in polynomial time via the provided reduction.
Assume without loss of generality that each weight of~$\omega$ is divisible by~$8n$ and let~$\omega_{\max}$ denote the largest assigned weight.

Now we construct an instance of \pivotprob{Weighted Independent Set}{3-Swap} consisting of a graph $G' := (V',E')$, weight function $\omega'$, and an initial solution $T$ as follows.
Initialize~$G'$ as the disjoint union of $G$ and $(U, F)$.
Each vertex in $V$ has weight one, and each vertex in $U$ carries the same weight as assigned by $\omega$.
Next, add a set~$X$ of~$k-2$ vertices~$x_i$ with~$i\in [2,k-1]$.
Each such vertex has weight~$3$.
For each~$i\in [2,k-1]$, make~$x_i$ adjacent to all vertices of~$V_{i-1} \cup V_{i} \cup V_{i+1}$.
Next, add an up-elevator $Y$ of $X$, where we consider the ordering $(x_2, \dots, x_{k-1})$ of the vertices in $X$.
We label the vertices in $Y$ as $y_3, \dots, y_{k-1}$ such that the neighborhood of $y_i$ in $X$ is exactly $\{x_i \mid j \in [2,i]\}$.
Moreover, we add two vertices~$v^*$ and~$w^*$ of weight~$2\omega_{\max}$ and~$3\omega_{\max} + 1$, respectively. Then, we add the edge~$\{v^*,w^*\}$  and make~$w^*$ is adjacent to all vertices of~$V \cup X \cup Y$.  
Finally, we make each vertex of~$V \cup X \cup Y \cup \{v^*\}$ adjacent to each vertex of~$U \setminus S$.
See \cref{fig:w1_hard} for an illustration.

\begin{figure}[ht!]
	\centering
	\includegraphics{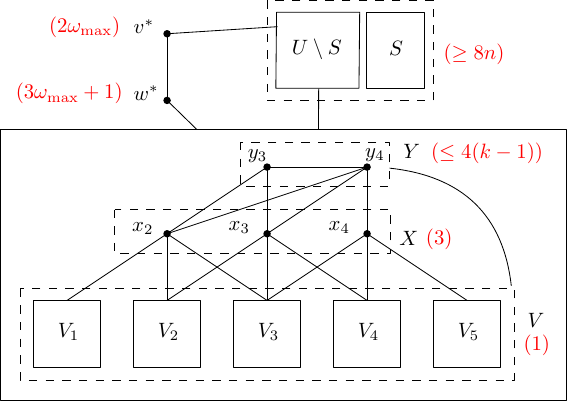}
	\caption{Illustration of $G'$ in the proof of \cref{thm:wis_hard} for $k = 5$. Not all edges are depicted. An edge between a box and a vertex/box indicate a complete bipartite subgraph between the vertices in the former box and the latter vertex/box. Red numbers in brackets indicate the weights.}
	\label{fig:w1_hard}
\end{figure}

The initial solution~$T$ is then defined by~$S \cup V_k \cup \{v^*\}$.
Recall that~$V_k$ contains only a single vertex and that this vertex is part of every multi-colored independent set of size~$k$ in~$G$.

\begin{claim}
\label{claim:wis_hard}
	Let $\pi$ be a maximal improving sequence for $(G', \omega')$ from $T$.
	If $\pi$ contains a solution $\overline{T}$ such that $\overline{T} \cap V$ is an independent set of size $k$, then $\overline{T}$ is locally optimal.
	Otherwise, $\pi$ has length exponential in $n$. 
\end{claim}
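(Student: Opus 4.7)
The plan is to exploit a case split on whether $\pi$ ever contains a $\overline{T}$ with $\overline{T}\cap V$ a size-$k$ independent set $I$. In the affirmative case, I would show that $\overline{T}$ is forced to equal $I \cup \{v^*, y_{k-1}\} \cup S$ and then verify local optimality by enumerating improving $3$-swaps. In the complementary case, I would argue that $\pi$ must trace the exponentially long trajectory of the underlying $(U, F, \omega)$-instance guaranteed by \cref{thm:wis_all_exp}.

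For the affirmative case, the structure of $\overline{T}$ is pinned down as follows. Since $I$ contains one vertex from every $V_j$-clique, each $x_i$ is blocked by some vertex of $I$ (because $x_i$ is adjacent to $V_{i-1} \cup V_i \cup V_{i+1}$), so $\overline{T} \cap X = \emptyset$; likewise $w^* \notin \overline{T}$, as $w^*$ is adjacent to all of $V$. Next, $v^* \in \overline{T}$ because $v^*$ starts in $T$ and the only substitute with strictly larger weight is $w^*$, which becomes blocked as soon as the first vertex of $V \setminus V_k$ is added along $\pi$. With $v^* \in \overline{T}$, all of $U \setminus S$ is blocked and so $\overline{T} \cap U \subseteq S$; since all $U$-weights are multiples of $8n$ while the total weight on $V \cup X \cup Y$ is $O(n)$, losing any single $S$-vertex would be unrecoverable along the remainder of $\pi$, forcing $\overline{T} \cap U = S$. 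Finally, $y_{k-1} \in \overline{T}$: otherwise either no elevator vertex is present (and $y_3$ can be inserted as a positive-weight $1$-swap, because its external neighbors $w^*$ and $U \setminus S$ are both absent) or some lower-level $y_i$ is present (and \cref{lem:elevator_impr} applies to ascend one level), contradicting local optimality. Given this structure, local optimality follows from a direct check: $I$ saturates every $V_j$-clique, the $v^* \leftrightarrow w^*$ swap is blocked by $I \cup \{y_{k-1}\}$ (requiring more than three removals), the elevator is at its top, $U \setminus S$ is blocked by $v^*$, and removing any $S$-vertex costs at least $8n$ which the $(V, X, Y)$-gadget cannot compensate.

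For the complementary case, I would argue that any maximal improving sequence avoiding a size-$k$ $V$-IS must either perform the improving $3$-swap that removes $v^*$ together with the unique vertex of $V_k$ and inserts $w^*$---whereupon subsequent solutions are of the form $\{w^*\} \cup R$ with $R$ an independent set of $(U, F)$, and their improvements correspond to local moves in $(U, F, \omega)$ starting from $S$, which by \cref{thm:wis_all_exp} require exponentially many steps---or it stalls inside the $(V, X, Y)$-gadget. To eliminate the stall option, I would show that any configuration with $v^* \in \overline{T}$, $\overline{T} \cap U = S$, and $|\overline{T} \cap V| < k$ is never locally optimal: a missing $V_j$-clique can be populated by adding one of its vertices after removing at most two blocking $x_i$'s (feasible in a $3$-swap), and an incomplete elevator can always be extended upward via \cref{lem:elevator_impr}. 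The main technical obstacle will be this stall-elimination step, which requires a careful case analysis using the weight hierarchy---the divisibility of $U$-weights by $8n$ combined with the bounded weights of the $(V, X, Y)$-gadget---to ensure that every partial configuration really does admit an improving $3$-swap until either a size-$k$ $V$-IS is formed or the solution commits to the exponential $U$-branch.
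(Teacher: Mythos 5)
Your proposal rests on an incorrect model of the construction: the elevator vertices in $Y$ are adjacent to \emph{all} of $V$ (this is the shared ``neighborhood outside $Y\cup X$'' that the proof uses both when it rules out adding a $Y$-vertex to $\overline{T}$ by pointing at $v_1,v_2,v_3$, and in the $y_{k-1}$ case where it notes $y_{k-1}$ is adjacent to all of $V\cup X\cup Y$). With the intended edges in place, the set $I\cup\{v^*,y_{k-1}\}\cup S$ you propose as the forced form of $\overline{T}$ is not even an independent set, since $y_{k-1}$ is adjacent to every vertex of $I$; in fact $\overline{T}\cap Y=\emptyset$ once $\overline{T}\cap V$ is a size-$k$ independent set. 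Worse, under your reading (where $Y$ is not adjacent to $V$) the claim itself is false: the solution $I\cup\{v^*\}\cup S$ has $\overline{T}\cap V=I$ of size $k$, is reachable from $T$ by $k-1$ greedy $1$-swaps, yet inserting $y_3$ would be an improving $1$-swap (none of $x_2,x_3$, the rest of $Y$, $w^*$, or $U\setminus S$ is present), so it would not be locally optimal. So the step ``$y_{k-1}\in\overline{T}$'' is not a structural observation but an artifact of the wrong adjacency assumption.

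The second half of the sketch also does not go through as written. Your ``stall elimination'' step proposes to insert a vertex of a missing $V_j$-clique after removing ``at most two blocking $x_i$'s,'' but a $V_j$-vertex is adjacent to the three heavier vertices $x_{j-1},x_j,x_{j+1}$ (each of weight $3$), so such a $3$-swap is neither feasible nor improving. The actual argument in the paper proceeds in the opposite direction: whenever $\overline{T}\cap V$ is not a size-$k$ independent set, one inserts some $x_p$ (removing at most two $V$-vertices), then builds up $X$, then enters and climbs the elevator via \Cref{lem:elevator_impr}, and finally at the top-level trades $\{v^*,y_{k-1}\}$ for $w^*$, landing at $S\cup\{w^*\}$; only then does one invoke \Cref{thm:wis_all_exp} on the induced $(U,F,\omega)$-trajectory. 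Your proposal skips the part that forces every improving sequence through $S\cup\{w^*\}$ rather than stopping at some intermediate local optimum, which is precisely the content that makes the claim nontrivial.
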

\begin{claimproof}
We start with the following observations:
\begin{enumerate}
\item\label{no w star} There is no improving swap that removes~$w^*$.
\item\label{w star in implies all exp} If~$w^*$ is part of a solution, then the solution is a subset of~$U \cup\{w^*\}$.
\item\label{w star or v star} Each reachable solution from~$T$ contains either~$v^*$ or~$w^*$.
\end{enumerate}
Indeed, Item~\ref{no w star} follows from the fact that~$w^*$ has weight~$3\wmax + 1$, which is strictly larger than the sum of the next two highest weights assigned by~$\omega'$.
Thus, no improving~$3$-swap can remove vertex~$w^*$ from a given solution.
Next, Item~\ref{w star in implies all exp} follows from the fact that~$w^*$ is a neighbor of each other vertex of~$V' \setminus U$.
Finally, Item~\ref{w star or v star} follows from Item~\ref{no w star} and the fact that each reachable solution from~$T$ that does not contain~$w^*$, has to contain~$v^*$, since~$v^*$ is contained in~$T$ and the weight of~$v^*$ is at least the weight of any two vertices besides~$\{v^*, w^*\}$.
In other words, each improving swap that removes~$v^*$ from the current solution, has to add~$w^*$ to the solution.

Now suppose $\pi$ has a solution $\overline{T}$ such that $\overline{T} \cap V = \{v_1, \dots, v_k\}$ is an independent set.
Without loss of generality, assume that $v_i \in V_i$ for $i \in [k]$.
For the sake of contradiction, assume that $\overline{T}$ is not locally optimal and there is an improving~$3$-swap~$W$.
Since the swap is improving, at least one vertex is added by this swap.
Note that this swap cannot add any vertex of~$Y \cup \{w^*\} \cup (U\setminus S)$ to the solution, since each such vertex is adjacent to~$v_1,v_2,$ and~$v_3$, and at most two of these vertices can be removed by~$W$.
Similarly, $W$ cannot add any vertex~$x_i$ to the solution, since~$x_i$ is adjacent to~$v_{i-1},v_i,$ and~$v_{i+1}$ and at most two of these vertices can be removed by~$W$.
Further, if $v^*$ is not in $\overline{T}$ then it cannot be added by~$W$, due to Items~\ref{no w star} and~\ref{w star or v star}.
Hence, the only vertices~$W$ could add are from~$V$.
However, if $W$ adds a vertex in $V_i$ for some $i \in [k-1]$, it needs to remove $v_{i}$ and we cannot add other vertices from $V_i$, since $V_i$ is a clique.
This implies that $W$ can only add at most one vertex in $V$ and remove at least one vertex in $V$.
Since, vertices in $V$ have the same weights and vertices in $V' \setminus V$ have positive weights and cannot be added by $W$, $W$ cannot be improving, a contradiction.
Hence, $\overline{T}$ is locally optimal.

For the remainder of the proof, we now assume that $\pi$ does not have such a solution~$\overline{T}$.
We show that if $\pi$ contains the solution~$S \cup \{w^*\}$, then it has length exponential in $n$.
Indeed, by Items~\ref{no w star} and~\ref{w star in implies all exp}, the subsequence~$\pi'$ of~$\pi$ starting from~$S\cup \{v^*\}$ is a sequence of improving~$3$-swaps for~$(U,F,\omega)$ starting from~$S$.
Since $S$ is far away from all the local optimum of~$(U,F,\omega)$, the sequence~$\pi'$ has exponential length in~$n$ as claimed.

Hence, it suffices to show that $\pi$ must contain the solution~$S \cup \{w^*\}$.
For each~$i\in [1,|\pi|]$, let~$T_i := \pi(i)$ denote the~$i$th solution in this sequence.
By Item~\ref{w star or v star}, for each~$i\in [1,|\pi|]$, $T_i$ contains either~$v^*$ and~$w^*$.
Moreover, by Item~\ref{no w star}, the solutions in~$\pi$ containing~$v^*$ are a prefix of~$\pi$.
Let~$i\in [1,|\pi|]$, such that~$v^*\in T_i$.
Observe that this implies that~$T_i \cap U = S$.
This is due to the following facts:
\begin{itemize}
\item $v^* \in T_i$, which implies that~$T_i \cap U \subseteq S$, since~$v^*$ is a neighbor of each vertex of~$U \setminus S$.
\item Since $Y$ is a clique, $T_i$ has at most one vertex in $Y$, whose weight is less than $4k$. Hence, the total weight of all vertices in~$T_i \cap (X \cup Y \cup V)$ is less than $8n$, and hence less than the weight of each individual vertex of~$U$.
\end{itemize}
That is, $T_i \cap U$ contains only vertices of~$S$, and cannot miss any vertex of~$S$, as otherwise, the weight of~$T_i$ would be strictly less than the weight of~$T$, which would then contradict the fact that~$T_i$ is part of an improving sequence starting with~$T$.

Note that this further implies that~$T_i$ contains at least one vertex of~$V \cup X \cup Y$, as otherwise, $T_i = S \cup \{v^*\}$, which has strictly less weight than~$T = S \cup \{v^*\} \cup V_k$.
Combined with Item~\ref{w star in implies all exp}, this then implies that if~$T_{i+1}$ exists and contains~$w^*$, the required~$3$-swap (i)~adds~$w^*$ to the solution and (ii)~removes~$v^*$ and a unique vertex of~$V \cup X \cup Y$ from~$T_i$.
Hence, if~$T_{i+1}$ exists and contains~$w^*$, then~$T_{i+1} \cap U = T_i \cap U = S$, which implies that~$T_{i+1} = S \cup \{w^*\}$.

Consequently, to finish the proof, it suffices to show that~$T_i$ is not locally optimal.
Since~$\pi$ is a sequence of finite length, this then implies that there is some smallest~$j$, such that~$T_j$ contains~$w^*$ and moreover fulfills~$T_j = S \cup \{w^*\}$ by the above argumentation.
To show that~$T_i$ is not locally optimal, we distinguish several cases depending on the intersection of~$T_i$ with~$V \cup X  \cup Y$.
For each of the cases, we present an improving~$3$-swap.
Let~$T_i' := T_i \cap (V \cup X  \cup Y)$ and note that~$T_i = S \cup \{v^*\} \cup T_i'$.

\textbf{Case 1:} \emph{$T_i' \subseteq V$.}
Recall that for each~$j\in [k]$, $V_j$ is a clique in~$G'$, which implies that~$|T_i' \cap V_j| \leq 1$. 
Since $T_i \cap V$ is not an independent set of size~$k$, there is some~$p\in [2,k-1]$, such that~$|T_i' \cap (V_{p-1} \cup V_{p} \cup V_{p+1})| \leq 2$.
Hence, the swap adding the vertex~$x_p$ to the solution while removing the at most two vertices from~$V_{p-1} \cup V_{p} \cup V_{p+1}$ is an improving~$3$-swap.

\textbf{Case 2:} \emph{$T_i' \subseteq V \cup X$ with~$T_i' \cap X \neq \emptyset$.}
If~$T_i' \cap X = X$, then~$T_i' = X$ and (i)~adding~$y_3$ to the solution and (ii)~removing~$x_2$ and~$x_3$ from the solution, yields an improving~$3$-swap.
Otherwise, there is some~$p\in [2,k-2]$, such that either~$x_p \in T_i'$ and~$x_{p+1} \notin T_i'$ or vice versa.
In both cases, $T_i' \cap (V_{p} \cup V_{p+1}) = \emptyset$, since~$x_p$ and~$x_{p+1}$ are both adjacent to each vertex of~$V_p$ and $V_{p+1}$.
Assume without loss of generality that~$x_p \in T_i'$ and~$x_{p+1} \notin T_i'$.
Then, (i)~adding~$x_{p+1}$ to the solution and (ii)~removing at most one vertex of~$V_{p+2}$ from the solution yields an improving~$3$-swap.

\textbf{Case 3:} \emph{There is some~$p\in [3,k-1]$, such that~$y_p \in T_i'$.}
If $p < k-1$, then there exists an improving 3-swap by \cref{lem:elevator_impr}.
Otherwise, observe that $y_{k-1}$ is the neighbor of all vertices in $V \cup X \cup Y$.
Hence, $T_i \cap (V \cup X \cup Y) = \{y_{k-1}\}$.
Then adding $w^*$ and removing $v^*$ and $y_{k-1}$ yields an improving 3-swap.

Hence, if~$G$ does not contain an independent set of size~$k$, then $S\cup \{w^*\}$ is a solution which is part of every sequence of improving~$3$-swaps starting at~$T$.
This completes the proof of the claim. 
\end{claimproof}

Note that the claim above implies that if $G$ has an independent set $Q$ of size $k$, then there is a maximal improving sequence starting from $T$ with length at most $k$.
Indeed, we add all the vertices of $Q$ to the initial solution by~$k-1$ consecutive~$1$-swaps.
Then we arrive at a solution $\overline{T}$ such that $\overline{T} \cap V = Q$.
Hence, $\overline{T}$ is locally optimal by \cref{claim:wis_hard}.

Now suppose for the sake of contradiction that there exists an algorithm to solve \pivotprob{Weighted Independent Set}{3-Swap} in time $g(\ell) \cdot n^{\Oh(1)}$ for some computable function~$g$.
We set $\ell = k$, and simulate the algorithm on the instance $(G', \omega', T)$ above for $g(\ell) \cdot n^{\Oh(1)}$ steps.
If the algorithm returns a solution $T^*$ such that $T^* \cap V$ is an independent set of size $k$, then we know that $G$ is a YES-instance of \textsc{Multi-colored Independent Set}.
Otherwise, by \cref{claim:wis_hard}, since the running time $g(\ell) \cdot n^{\Oh(1)}$ is not exponential in $n$, the algorithm should not return any locally optimal solution; that is, it either returns a solution that is not locally optimal or does not terminate within this time.
In this case, we know that the promise that there is a local optimum within $k$ improving steps from $T$ is broken.
As argued above, this implies that $G$ does not have an independent set of size $k$, and hence~$(G,k)$ is a NO-instance.
Therefore, we can decide \textsc{Multi-colored Independent Set} in time $g(k) \cdot n^{\Oh(1)}$, a contradiction to the assumption $\FPT \neq \W[1]$. \hfill $\qedsymbol$

\begin{remark}
	Our proof of \cref{thm:wis_hard} above also implies the following:
	\begin{itemize}
		\item It is \W[1]-hard to decide whether there is a local optimum of distance at most $k$ from a given initial solution, when parameterized by $k$.
		\item Unless $\FPT = \W[1]$, there is no algorithm to approximate the distance to the nearest local optimum with running time $g(\ell) \cdot n^{\Oh(1)}$ and approximation ratio $n^{\Oh(1)}$.
	\end{itemize}
\end{remark}

\subsection{All-exp Property of \plsprob{Weighted Independent Set}{3-swap}}
\label{subsec:wis_all_exp}

The main argument to show the all-exp property of \plsprob{Weighted Independent Set}{3-swap} is the following tight \pls-reduction from \plsprob{Max Cut}{Flip} (\cref{thm:tight_to_wis} below).
This reduction is a slight adaptation to the \pls-reduction presented by Komusiewicz~and~Morawietz~\cite{KomusiewiczM25}. 
Roughly speaking, the reduction constructs an instance with two parts~$C$ and~$D$ where the vertices of $C$ model the respective solutions of \textsc{Max Cut} and the vertices of~$D$ are used as gadgets to ensure that we can simulate the flip of a vertex in the \textsc{Max Cut} instance by a sequence of improving 3-swaps.
In the original reduction by Komusiewicz~and~Morawietz, some maximal independent sets can be improved by only swapping vertices in~$C$, thus leading to intermediate independent sets that are not maximal. Such independent sets could be improved by 1-swaps which could be combined with improving 1-swaps or 2-swaps in distant parts of the graph. Consequently, there is not a strong correspondence between improving flips in the \textsc{Max Cut} instance and improving swaps in the \textsc{Independent Set} instance.
To establish this correspondence, we add further vertices to~$C$, thus ensuring that each maximal solution can only be improved by swapping at least one vertex of~$D$ into the solution.
By turning~$D$ into a clique, we now ensure that at each time step the flip of only a single vertex in the \textsc{Max Cut} instance can be simulated.

Although \plsprob{Max Cut}{Flip} is already defined in \cref{sec:framework}, we give a slightly different definition here that is easier to work with.
An instance $I$ of \plsprob{Max Cut}{Flip} consists of an undirected graph $G = (V, E)$ and a weight function $\omega : E(G) \to \mathbb{N}$.
A solution of $I$ (also called a \emph{cut}) is a partition $(A, B)$ of the vertex set $V$ (i.e., each vertex of $V$ belongs to exactly one of $A$ and $B$), and its objective value is $\sum_{(u,v)\in A\times B, uv\in E} \omega(uv)$.
Two solutions are neighbors, if they differ by the \emph{flip} of a vertex, defined as the act of moving the vertex from one set of a partition to the other.

\begin{theorem}

\label{thm:tight_to_wis}
There exists a tight \pls-reduction from \plsprob{Max Cut}{Flip} restricted to instances $(G, \omega)$ of maximum degree at most $(\log|V(G)|)^{\Oh(1)}$ to \plsprob{Weighted Independent Set}{3-swap}.
\end{theorem}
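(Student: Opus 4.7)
The plan is to adapt the existing \pls-reduction of Komusiewicz and Morawietz by making the two modifications suggested in the paragraph preceding the theorem, and then to identify a distinguished set $R$ inside the target instance that certifies tightness. Concretely, for each vertex $v$ of the \textsc{Max Cut} instance $(G,\omega)$ I would create two "side" vertices $v_A, v_B$ forming the selector clique (so an independent set can pick at most one side per vertex), weight them using the local cut contribution exactly as in the original reduction, and then place them as part~$C$. I would attach to each $(v, \text{side})$ a gadget of auxiliary vertices of sufficiently large weight so that any \emph{maximal} independent set in the constructed graph must, for every $v$, contain exactly one of $v_A, v_B$ together with its auxiliaries; this rules out "dirty" configurations where no side is chosen. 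The part $D$ consists of the transition gadgets whose temporary selection triggers the simulation of a single flip; I would turn $D$ into a clique, so that at any moment at most one vertex of $D$ may lie in the independent set. The bounded-degree assumption on $(G,\omega)$ ensures all weights remain polynomial in $|V(G)|$.

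The decoding $\plsredsol$ maps an independent set $S$ to the cut $(A_S, B_S)$ where $v\in A_S$ iff $v_A\in S$ (with a canonical tiebreak if neither is present, which will only occur in non-maximal intermediate states). I would set $R$ to be the collection of independent sets whose intersection with $D$ is empty and whose $C$-part is a full "clean" encoding of some cut (exactly one side vertex plus its auxiliaries per $v$). Condition~(1) of Definition~\ref{def:tight_pls_complete} follows because any solution containing a $D$-vertex or missing some side selection admits an improving 3-swap that either evicts the $D$-vertex or completes the selection; hence every local optimum lies in $R$. Condition~(2) holds by explicit construction: each cut of $G$ re-encodes as a clean $R$-state in polynomial time, and the weights are chosen so that the decoding round-trips to the same cut.

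The heart of the argument is condition~(3). Starting from $s_q\in R$, any improving path in $T_{\plsredins(I)}$ must leave $R$ by a 3-swap that inserts a vertex $d\in D$, since every intra-$C$ swap from a clean configuration can be shown to be non-improving. The clique structure of $D$ then forces at most one $D$-vertex to be present until the next return to $R$, so the path simulates the flip of a single, well-defined vertex $v^*\in V(G)$; by the weight design, returning to $R$ is possible only after the selector at $v^*$ has been exchanged, i.e., the next $R$-state $s'_q$ on the path satisfies $\plsredsol(I,s'_q) = \plsredsol(I,s_q) \oplus \{v^*\}$ (or equals $\plsredsol(I,s_q)$ if the simulation backtracks), and improvement in the objective translates to a strictly increasing cut value. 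I expect the main obstacle to be ruling out "entangled" detours in which a path between two consecutive $R$-states would weave through several partial flip simulations before settling; this requires a careful invariant along the path---tracking which $D$-vertex is active, which side selectors are currently broken, and showing that any improving 3-swap either advances or cleanly aborts the one ongoing simulation, never starting a second one. Once this invariant is established, transferring the all-exp property from \plsprob{Max Cut}{Flip}~\cite{MonienT10,KomusiewiczM25} to \plsprob{Weighted Independent Set}{3-swap} follows from the standard properties of tight \pls-reductions, yielding Theorem~\ref{thm:wis_all_exp}.
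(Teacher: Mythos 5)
Your high-level architecture matches the paper's: two "sides" per MaxCut-vertex placed in a core part~$C$, a transition-gadget part~$D$ turned into a clique so at most one $D$-vertex is ever active, and a distinguished set~$R$ of "clean" configurations encoding cuts. The decoding~$\plsredsol$ and the strategy for conditions~(1) and~(2) are also essentially right. However, there are three substantive gaps.

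First, you never say what the transition gadgets in~$D$ actually are, and this is where the reduction's real content lives. The paper builds, for each vertex~$v$ and each partition~$(P,Q)$ of~$N_G(v)$ that makes the flip of~$v$ improving, a pair of elevator cliques (an up-elevator and a down-elevator) joined by a turn vertex, with weights calibrated level by level so that each individual 3-swap along the "ride" is improving. Without such a level-by-level gadget it is not clear how to turn a single MaxCut flip — whose straightforward independent-set analogue is an $\Oh(\Delta)$-swap — into a path of improving 3-swaps; "auxiliary vertices of sufficiently large weight" does not pin this down, and plausible naive choices fail (e.g., if the intermediate states are not strictly weight-increasing there is no improving path). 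Also, the clean configuration must use \emph{three} replicas per side ($v_A,v_A',v_A''$ and $v_B,v_B',v_B''$), not just $v_A,v_B$ plus unspecified auxiliaries; the two extra replicas are precisely what block the bad 3-swaps of the original non-tight reduction, and the elevator's bottom level is weighted to fit exactly against a pair of them. Your description of weighting $v_A,v_B$ "using the local cut contribution" also does not match the paper, where $v_A,v_B$ all carry the same large weight and cut contributions are encoded through the edge-vertices $x_{u,v}$.

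Second, you explicitly defer the crux of condition~(3) — the invariant ruling out entangled detours and establishing that any direct $R$-to-$R$ path simulates exactly one flip. This is not a routine verification: it occupies the bulk of the paper's proof (a uniqueness argument over the gadget's configurations, a case analysis over what a 3-swap can add once a $D$-vertex is active, and an induction showing the only improving continuation at each step is the "next rung of the elevator"). Stating that the invariant "requires a careful analysis" without establishing it leaves the tightness claim unproved. Third, you misattribute the role of the bounded-degree hypothesis: it is not needed to keep weights polynomial, but to bound $|\mathcal{R}_v| \le 2^{\Delta}$, i.e., the number of partitions of $N_G(v)$ — hence the number of simulator gadgets per vertex — by a polynomial in $|V(G)|$.
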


\begin{proof}
\textbf{The core graph.}
Let~$I:=(G=(V,E),\omega)$ be an instance of \plsprob{Max Cut}{Flip} with maximum degree $\Delta$, where $\Delta = (\log|V|)^{\Oh(1)}$.
Let $n:= |V|$ and assume that each weight assigned by~$\omega$ is a multiple of~$2n$.
We obtain an instance~$I':=(G':=(V',E'),\omega')$ of \plsprob{Weighted Independent Set}{3-swap} as follows:
For each vertex~$v\in V$, we add six vertices~$v_A,v_A',v_A'',v_B,v_B'$, and~$v_B''$.
The weight of~$v_A$ and~$v_B$ is~$\alpha \cdot 4n$ and the weight of the other four vertices is~$\alpha := 2 \cdot \omega(E)$. For each edge~$uv\in E$, we add two vertices~$x_{u,v}$ and~$x_{v,u}$, each of weight~$\omega(uv)$.
Between these vertices, we add the following edges:
For each vertex~$v\in V$, we make~$G'[\{v_A,v_A',v_A'',v_B,v_B',v_B''\}]$ a biclique with bipartition~$(\{v_A,v_A',v_A''\},\{v_B,v_B',v_B''\})$.
Moreover, for each neighbor~$u$ of~$v$ in~$G$, we add (i)~edges between~$x_{u,v}$ and each vertex of~$\{v_A,v_A',v_A''\} \cup \{x_{v,w}\mid w \in N_G(v)\}$ and (ii)~edges between~$x_{v,u}$ and each vertex of~$\{v_B,v_B',v_B''\} \cup \{x_{w,v}\mid w \in N_G(v)\}$. 
This completes the basic part of the graph~$G'$, we let~$C$ denote the vertices of this part of~$G'$. 
To complete the construction, we will add several up-elevators and down-elevators that come in pairs (together with one additional vertex each) to simulate the flip of a single vertex in the Max Cut instance by a sequence of improving~$3$-swaps.
Before we do this, we give an intuition of the inner workings of basic part of~$G'$.

\textbf{Intuition.}
The idea behind~$G'[C]$ is that there is a natural 1-to-1 correspondence between partitions~$(A,B)$ of~$G$ and maximal independent sets in~$G'[C]$:
For each vertex~$v\in V$, each maximal independent set~$S$ contains either all of~$\{v_A,v_A',v_A''\}$ or all of~$\{v_B,v_B',v_B''\}$.
This then directly gives a partition of the vertex set of~$V$ based on whether~$S$ contains all of~$\{v_A,v_A',v_A''\}$ or all of~$\{v_B,v_B',v_B''\}$.
Additionally, for the respective cut~$(A,B)$, for each edge~$uv$ with~$u\in A$ and~$v\in B$, the vertex~$x_{u,v}$ will also be part of~$S$.
Thus, $\omega'(S) = n\cdot (4n+2)\alpha + \sum_{(p,q)\in A\times B, pq\in E} \omega(pq)$.
To simulate a flip in the partition~$(A,B)$, there thus exists a maximal independent set~$S'$ corresponding to the new partition~$(A',B')$ such that the symmetric difference between~$S$ and~$S'$ has size~$\Oh(\Delta)$.
As we aim for a hardness reduction with only~$3$-swaps, we however need to add further gadgets to ensure that we can simulate such an improving $\Oh(\Delta)$-swap by a sequence of improving~$3$-swaps.
In the original reduction by Komusiewicz~and~Morawietz~\cite{KomusiewiczM25}, the vertices~$\{v_A',v_A'',v_B',v_B''\mid v\in V\}$ did not exist, which allowed for swaps that removed a vertex~$v_A$, added a vertex~$v_B$, and additionally added some vertex~$x_{w,v}$.
In our reduction, this is not possible anymore, since the vertices~$v_A'$ and~$v_A''$ prevent the addition of~$v_B$ if at least one of these vertices is contained in the current solution as well.
Essentially, this ensures that if we have a maximal independent set~$S$ that contains for each vertex~$v$ either all of~$\{v_A,v_A',v_A''\}$ or all of~$\{v_B,v_B',v_B''\}$, then we can only find an improving swap by adding a vertex outside of~$C$ to the solution.

\begin{figure}[ht!]
	\centering
	\includegraphics[page=1]{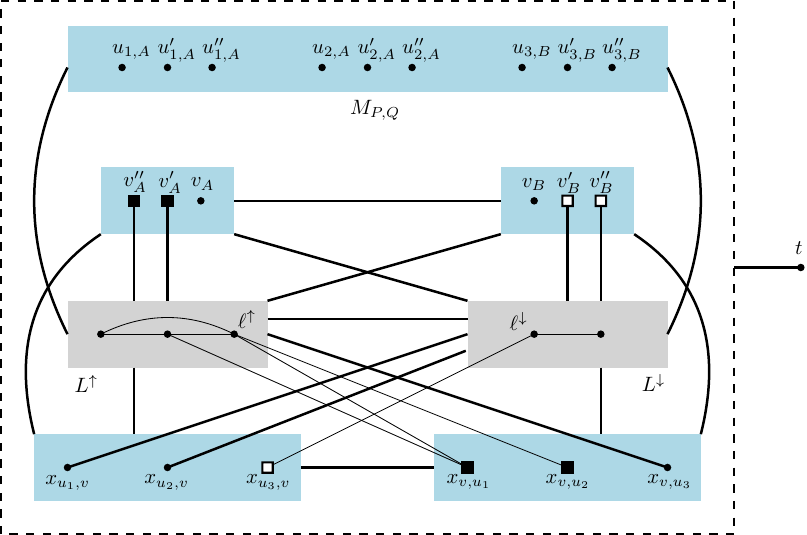}
	\caption{Illustration of the neighborhood of an $A2B$-simulator $(\Lup, t, \Ldo)$ of a vertex $v$ for $(P,Q$), where $P = \{u_1, u_2\}$ and $Q = \{u_3\}$. Vertices in $\Xup_{P, Q}$ and $\Xdo_{P,Q}$ are represented by black and white squares, respectively. Bold edges represent bicliques between a box and another box/vertex. Vertices in blue boxes are in $C$.}
	\label{fig:a2b_sim}
\end{figure}

\textbf{Additional Elevators.}
We now describe the remaining parts of~$G'$ that are outside of~$C$.
For each vertex~$v\in V$, we compute the collection~$\mr_v$ of all partitions~$(P,Q)$ of the neighborhood of~$v$ in~$G$ for which~$\sum_{p\in P} \omega(vp) < \sum_{q\in Q} \omega(vq)$.
Note that this can be done in polynomial time, since the maximum degree $\Delta$ of the graph is~$(\log n)^{\Oh(1)}$.
The collection~$\mr_v$ contains exactly those partitions~$(P,Q)$ for which flipping vertex~$v$ is improving, if~$Q$ are exactly those neighbors of~$v$ that are currently on the same side of the cut as~$v$.

Let~$v$ be a vertex of~$V$ and let~$(P,Q)$ be a partition of~$\mr_v$.
We first describe the elevator gadgets to simulate the flip of vertex~$v$ from~$A$  to~$B$; see \cref{fig:a2b_sim} for an illustration.
The gadgets for the flip from~$B$ to~$A$ are then nearly identical.
To formally define elevator gadgets, we need to define an ordering of the vertices of the graph.
We define this to be an arbitrary ordering of the vertices where the vertices of~$\{v_A',v_A'',v_B',v_B''\mid v\in V\}$ precede all other vertices.

We let~$\Xup_{P,Q} := \{v_A',v_A''\}\cup \{x_{v,p} \mid p\in P\}$, $\Xdo_{P,Q} := \{v_B',v_B''\} \cup \{x_{q,v} \mid q\in Q\}$, $M_{P,Q}:= \{p_A, p_A', p_A'' \mid p\in P\} \cup \{q_B, q_B', q_B'' \mid q\in Q\}$.
We add an~$\Xup_{P,Q}$-up-elevator~$\Lup$ to~$G'$, where the neighborhood of the vertices of~$\Lup$ outside of~$\Xup_{P,Q} \cup \Lup$ is~$M_{P,Q} \cup \{v_B,v_B',v_B''\} \cup \{x_{v,q} \mid q\in Q\} \cup \{x_{r,v} \mid r\in N_G(v)\}$.
Additionally, we add an~$\Xdo_{P,Q}$-down-elevator~$L^\downarrow$ to~$G'$, where the neighborhood of the vertices of~$\Ldo$ outside of~$\Xdo_{P,Q} \cup \Ldo$ is~$M_{P,Q} \cup \{v_A,v_A',v_A''\} \cup \{x_{p,v} \mid v\in P\} \cup \{x_{v,r} \mid r\in N_G(v)\}$.
Note that the ordering of the vertices implies that the bottom-level of~$\Lup$ is adjacent to only~$v_A'$ and~$v_A''$ of~$\Xup_{P,Q}$ and the bottom-level of~$\Ldo$ is adjacent to only~$v_B'$ and~$v_B''$ of~$\Xdo_{P,Q}$.
In particular, the weight of each vertex in any of these elevators is at least~$2\alpha-1$ (which is the weight of the bottom-level of~$\Ldo$) and at most~$3\alpha + n < 4\alpha$.

We make~$\Lup \cup \Ldo$ a clique.
Note that the closed neighborhood of the top-level~$\lup$ of~$\Lup$ contains exactly $v_B$ and the vertices of~$N_{P,Q} := M_{P,Q} \cup \{v_A',v_A'',v_B',v_B''\} \cup \{x_{v,r}, x_{r,v} \mid r\in N_G(v)\} \cup L^\uparrow \cup L^\downarrow$.
Similarly, the closed neighborhood of the top-level~$\ell^\downarrow$ of~$L^\downarrow$ is $N_{P,Q} \cup \{v_A\}$.
Next, we add a~\emph{turn vertex}~$t$ with weight equal to~$\omega'(\ell^\downarrow) + \omega'(v_A) - 1$, which we make adjacent to all vertices of~$N_{P,Q} \cup \{v_A,v_B\}$.  
We call the tuple~$(L^\uparrow,t,L^\downarrow)$ an~\emph{$A2B$-simulator of~$v$ for partition~$(P,Q)$}.

Note that since~$\Lup$ is an up-elevator with $|P|+1$ levels, $\omega'(\ell^\uparrow) = \omega'(\Xup_{P,Q}) + |P|+1 = 2\alpha + |P| + 1 + \sum_{p\in P} \omega(vp)$.
Further, since~$\Ldo$ is a down-elevator with $|Q|+1$ levels, $\omega'(\ell^\downarrow) = \omega'(\Xdo_{P,Q}) - |Q|-1 = 2\alpha - |Q| - 1 + \sum_{q\in Q} \omega(vq)$.
By the fact that $(P,Q) \in \mr_v$  and that each edge weight is a multiple of~$2n$, $\sum_{q\in Q} \omega(vq) - \sum_{p\in P} \omega(vp) \geq 2n$. 
Since~$|P| + |Q| < n$ and~$n>2$, this implies that
\begin{equation}
\label{eq:top_level_weights}
	\omega'(\ell^\uparrow) + 1 < \omega'(\ell^\downarrow).
\end{equation}

We add symmetrically a~\emph{$B2A$-simulator of~$v$ for~$(P,Q)$}, for which we exchange in the definition the subscripts of all relevant vertices from~$A$ to~$B$ and vice versa, and where we exchange vertices~$x_{r,s}$ by~$x_{s,r}$.

Finally, we let~$D$ denote the set of all vertices of~$G'$ that are not in~$C$ and turn these vertices into a clique.
This ensures that each independent set of~$G'$ can contain at most one vertex that is not in~$C$.
This is the second difference to the reduction by Komusiewicz~and~Morawietz~\cite{KomusiewiczM25}, where in~$D$, each maximal clique only had constant size.
However, to ensure the tightness of the reduction, we need to make the whole set~$D$ into a clique.

\textbf{Correctness.}
We now show that this reduction has the stated properties.
Let~$(A,B)$ be a partition of~$V$.
We define
$$g(A,B) := \{v_A,v_A',v_A''\mid v\in A\} \cup \{w_B,w_B',w_B''\mid w\in B\} \cup \{x_{v,w}\mid v\in A, w\in B, \{v,w\}\in E\}.$$

\begin{claim}
For each partition~$(A,B)$ of~$V$, $g(A,B)$ is an independent set in~$G'$.
\end{claim} 
\begin{claimproof}
Note that~$g(A,B)$ contains only vertices of~$C$, that is, no vertices of elevators and no turn vertices.
By construction, the vertices~$\{v_A,v_A',v_A''\mid v\in A\} \cup \{w_B,w_B',w_B''\mid w\in B\}$ form an independent set.
It remains to show that each vertex~$x_{v,w}\in g(A,B)$, $x_{v,w}$ has no neighbor in~$g(A,B)$.
The containment of~$x_{v,w}$ in~$g(A,B)$ implies that~$v\in A$ and~$w\in B$.
Consider the neighbors of~$x_{v,w}$ in~$C$.
These are (i)~$\{w_A,w_A',w_A''\}$, (ii)~$\{v_B,v_B',v_B''\}$, and (iii)~vertices of the form~$x_{s,t}$ with~$s=w$ or~$t=v$.
Since~$(A,B)$ is a partition of~$V$ and~$x_{v,w}\in g(A,B)$, $g(A,B)$ contains none of the vertices of~$\{w_A,w_A',w_A'',v_B,v_B',v_B''\}$. 
Moreover, $g(A,B)$ does not contain a vertex~$x_{s,t}$ with~$s=w$ or~$t=v$, since this would imply that~$s\in A$ and~$t\in B$ which contradicts~$v\in A$ and~$w\in B$ and the fact that~$(A,B)$ is a partition of~$V$.
Consequently, $x_{v,w}$ has no neighbor in~$g(A,B)$, which implies that~$g(A,B)$ is an independent set.
\end{claimproof}

We have the following claim that will be useful later.

\begin{claim}\label{C no loc opt}
Let $S$ be an independent set of $G'$ such that $S$ contains a vertex $z \in D$.
Without loss of generality, assume that $z$ is in an~$A2B$-simulator~$(\Lup,t,\Ldo)$ of some vertex~$v\in V$ for some partition~$(P,Q) \in \mr_v$.
Let $\Lup = \{\lup_1, \dots, \lup_{|P|+1}\}$ and $\Ldo = \{\ldo_1, \dots, \ldo_{|Q|+1}\}$.
Then there exists an independent set $S'$ such that $|S \oplus S'| \leq 3$ and $\omega'(S') > \omega'(S)$.
In particular,
\begin{itemize}
	\item If $z = \lup_i$ for some $i \in [1,|P|]$, then $S' = (S \setminus \{z, \Xup_{P,Q}[i+2]\}) \cup \{\lup_{i+1}\}$.
	\item If $z = \lup_{|P|+1}$, then $S' = (S \setminus \{z, v_A\}) \cup \{t\}$.
	\item If $z = t$, then $S' = (S \setminus \{z\}) \cup \{\ldo_{|Q|+1}, v_B\}$.
	\item If $z = \ldo_i$ for some $i \in [2, |Q|+1]$, then $S'$ is either  $(S \setminus \{z\}) \cup \{\ldo_{i-1}, \Xdo_{P,Q}[i+1]\}$ when this is an independent set or $(S \setminus \{z\}) \cup \{q_B\}$ for $q$ such that $x_{q,v} = \Xdo_{P,Q}[i+1]$.
	\item If $z = \ldo_1$, then $S' = (S \setminus \{z\}) \cup \{v_B', v_B''\}$.
\end{itemize}
\end{claim}
\begin{claimproof}
We consider the five cases in the statement.

\textbf{Case 1:} \textit{$z = \lup_i$ for some $i \in [1,|P|]$.}
Then the statement follows from \cref{lem:elevator_impr}.

\textbf{Case 2:} \textit{$z = \lup_{|P|+1}$.}
By \eqref{eq:top_level_weights} and by construction, observe that $\omega'(z) + \omega'(v_A) < \omega'(\ldo_{|Q|+1}) + \omega'(v_A)  - 1 = \omega'(t)$.
Moreover, $N[t] = N[z] \cup \{v_A\}$.
Thus, $S' := (S\setminus \{z,v_A\}) \cup \{t\}$ satisfies the condition of the claim.

\textbf{Case 3:} \textit{$z = t$.}
By definition, $N[t] = \{v_B\} \cup N[\ldo_{|Q|+1}]$ and~$\omega'(t) = \omega'(\ldo_{|Q|+1}) + \omega'(v_B) - 1$.
Hence, $S' := (S\setminus \{t\}) \cup \{\ldo_{|Q|+1},v_B\}$ is a strictly better independent set than~$S$, and $|S \oplus S'| \leq 3$.

\textbf{Case 4:} \textit{$z = \ldo_i$ for some $i \in [2, |Q|+1]$.}
By definition, $N[\ldo_i] = N[\ldo_{i-1}] \cup \{x_{q,v}\}$, where $x_{q,v} := \Xdo_{P,Q}[i+1]$.
Moreover, $\omega'(\ldo_{i-1}) + \omega'(x_{q,v}) = \omega'(\ldo_i) + 1$.
Hence, $S' := (S\setminus \{z\}) \cup \{\ldo_{i-1}, x_{q,v}\}$ has strictly higher weight than~$S$ and $|S \oplus S'| \leq 3$.
If~$S'$ is an independent set, we are done.
Otherwise, $S$ contains a neighbor~$z'$ of~$x_{q,v}$.
Since~$x_{q,v}$ is not adjacent to any of the vertices in~$\{\ldo_{i-1}, v_B, v_B',v_B'',q_A,q_A',q_A''\} \cup \{x_{w,v}\mid w\in N_G(v)\} \cup \{x_{q,u}\mid u\in N_G(q)\}$, $z'$ is none of these vertices.
Moreover, $\{q_B\} \cup \{x_{v,w}\mid w\in N_G(v)\}$ do not intersect with $S$, since~$z$ is adjacent to all these vertices and~$S$ is an independent set containing~$z$.
Thus, $z'$ has to be one of the remaining neighbors of~$x_{q,v}$ inside~$C$, which are the vertices of~$\{q_B',q_B''\}\cup\{x_{u,q}\mid u\in N_G(q)\}$.
By construction, this implies that~$z'$ is adjacent to all neighbors of~$q_B$ outside of~$C$.
Hence, $(S\setminus \{z\}) \cup \{q_B\}$ is an independent set of higher weight than~$S$, since~$\omega'(q_B) = 4\alpha n > \omega'(\ldo_i)$.

\textbf{Case 5:} \textit{$z = \ldo_1$.}
By definition, $\omega'(v_B') + \omega'(v_B'') = \omega'(z) + 1$ and~$N[v_B'] \cup N[v_B''] \subseteq  N[z]$.
Hence, $S':= (S\setminus \{z\}) \cup \{v_B',v_B''\}$ is a strictly better independent set than~$S$.
\end{claimproof}

One immediate implication of the claim above is that all locally optimal independent sets of~$G'$ contain only vertices of~$C$.
Next, let~$R:= \{g(A,V\setminus B)\mid A\subseteq V\}$.

\begin{claim}
\label{claim:R_all_lopt}
The set~$R$ contains all locally optimal independent sets of~$G'$.
\end{claim} 
\begin{claimproof}
Let $S$ be a locally optimal independent set of $G'$.
By~\Cref{C no loc opt}, $S$ contains only vertices of~$C$.
First, we show that $S$ contains exactly one of $v_A$ and $v_B$.
If~$v_A$ and~$v_B$ are both not contained in~$S$, then~$S \cup \{v_A\}$ or~$S \cup \{v_B\}$ is an independent set.
This is due to the fact that~$N[\{v_A,v_B\}] \cap C$ is a biclique.
Hence, the two adjacent vertices $v_A$ and~$v_B$ cannot both have neighbors in~$S\cap C$.
Further, since~$S$ contains no vertex outside of~$C$, $S \cup \{v_A\}$ or~$S \cup \{v_B\}$ is an independent set of strictly larger weight than~$S$, a contradiction.
Since $v_A$ and $v_B$ are adjacent, exactly one of them is contained in $S$.

Since $v_A$, $v'_A$, and $v_A''$ are not adjacent and have the same neighbors in $C$ as $v_A$, if $v_A \in S$ then $v'_A, v''_A \in S$.
With a similar argument for $v'_B$ and $v_B''$, $S$ must contain for each vertex~$v\in V$ exactly one of the two sets~$\{v_A,v_A',v_A''\}$ and~$\{v_B,v_B',v_B''\}$.
Thus, for each locally optimal independent set~$S$, there is a partition~$(A,B)$ of~$V$, such that~$C \supseteq S\supseteq \{v_A,v_A',v_A''\mid v\in A\} \cup \{w_B,w_B',w_B''\mid w\in B\}$.
For each edge~$uv\in E$ with~$u\in A$ and~$v\in B$, the vertex~$x_{u,v}$ is also part of~$S$, since the only neighbors of~$x_{u,v}$ are vertices that are adjacent to at least one of~$u_A$ or~$v_B$ and thus not contained in~$S$.
In other words, if~$S$ would not contain~$x_{u,v}$, then~$S\cup \{x_{u,v}\}$ would be a better independent set.
Consequently, for each locally optimal independent set~$S$, there is a partition~$(A,B)$ of~$V$, such that~$S = g(A,B)$.
Thus, $S\in R$. 
\end{claimproof}

\begin{claim}
\label{claim:lopt_to_lopt}
For each partition~$(A,B)$ of~$V$, where~$g(A,B)$ is a locally optimal independent set in~$G'$, $(A,B)$ is a locally optimal cut for~$G$.
\end{claim} 
\begin{claimproof}
Let~$S$ be a locally optimal independent set in~$G'$ and let~$(A,B)$ be the partition of~$V$, such that~$S=g(A,B)$.
We show that~$(A,B)$ is a locally optimal cut for~$G$.
Assume towards a contradiction that this is not the case and let~$v$ be a vertex of~$V$ that (without loss of generality) can be flipped from~$A$ to~$B$ to obtain a better cut.
We show that there is an improving~$3$-swap for~$S$ in~$G'$.
Since flipping vertex~$v$ from~$A$ to~$B$ is improving, $\sum_{p\in P}\omega(\{v,p\}) < \sum_{q\in Q}\omega(\{v,q\})$ for~$P:=N_G(v) \cap A$ and~$Q:=N_G(v) \cap B$.
Thus, we added an~$A2B$-simulator~$(\Lup,t,\Ldo)$ for~$v$ with partition~$(P,Q)$.
Moreover, by definition of~$g(A,B)$, the bottom-level~$\ell_1$ of~$\Lup$ is adjacent to only the vertices of~$\{v_A',v_A''\}$ in~$S$ and has weight~$\omega'(\ell_1) = \omega'(v_A') + \omega'(v_A'') + 1$.
Hence, $S' := (S\setminus \{v_A',v_A''\}) \cup \{\ell_1\}$ is a strictly better independent set than~$S$.
This contradicts the assumption that~$S$ is locally optimal.
Consequently, $(A,B)$ is locally optimal if and only if~$g(A,B)$ is locally optimal.  
\end{claimproof}

For two independent sets $S$ and $S'$ in $R$, we define a \emph{direct sequence} from $S$ to $S'$ to be an improving sequence that starts at $S$, ends at $S'$, and contains no other solutions in $R$.

\begin{claim}
\label{claim:direct_sequence}
Let $(A,B)$ be a partition of $V$.
Let $S = g(A,B)$.
Then, 
\begin{enumerate}[(a)]
	\item For every partition $(\overline{A},\overline{B})$ of $V$ that can be obtained from $(A,B)$ by an improving flip, there exists a direct sequence from $S$ to $g(\overline{A},\overline{B})$.
	\item For any direct sequence $\pi$ from $S$ to another independent set $S'$ in $R$, there exists a partition $(\overline{A},\overline{B})$ that can be obtained from $(A,B)$ by an improving flip and satisfies $S' = g(\overline{A},\overline{B})$. Further, $\pi$ has length $\Oh(\Delta)$.
\end{enumerate}
\end{claim} 
\begin{claimproof}
We start with (a).
Without loss of generality, assume that $(\overline{A}, \overline{B})$ is obtained from $(A,B)$ by flipping a vertex $v$ from $A$ to $B$.
Let $P := N_G(v) \cap B$ and $Q := N_G(v) \cap A$.
Observe that $(P,Q) \in \mr(v)$.
Let $(\Lup, t, \Ldo)$ be the $A2B$-simulator of~$v$ for~$(P,Q)$.
Let $\Lup = \{\lup_1, \dots, \lup_{|P|+1}\}$ and $\Ldo = \{\ldo_1, \dots, \ldo_{|Q|+1}\}$, where $\lup_1$ and $\ldo_1$ are the bottom-levels of $\Lup$ and $\Ldo$, respectively.
Let $L$ be the sequence $\lup_1, \dots, \lup_{|P|+1}, t, \ldo_{|Q|+1}, \dots, \ldo_1$.
For $i \in [1,|P|+|Q|+3]$, let $S_i$ be the set obtained from $S$ by adding the $i$th element of $L$ and:
\begin{itemize}
	\item Removing the first $i+1$ elements of $\Xup_{P,Q}$, if $i \leq |P|+1$;
	\item Removing $\Xup_{P,Q}$ and $v_A$, if $i = |P|+2$;
	\item Removing $\Xup_{P,Q}$ and $v_A$, and adding $v_B$ and the last $i - |P|-3$ elements of $\Xdo_{P,Q}$, if $i > |P|+ 2$.
\end{itemize}
Further, define $S_{|P|+|Q|+4}$ to be the set obtained from $S$ by removing $\Xup_{P,Q}$ and $v_A$ and adding $v_B$ and $\Xdo_{P,Q}$.

We show that $\pi := (S, S_1, \dots, S_{|P|+|Q|+4})$ is an improving sequence.
By \cref{C no loc opt}, for $i \in [1,|P|+2] \cup \{|P|+|Q|+4\}$, $S_i$ is an independent set with larger weight than $S_{i-1}$.
Next, for $i \in [|P|+3, |P|+|Q|+3]$, we prove by induction that $S_i$ is an independent set.
For $i = |P|+3$, $S_i = (S_{i-1} \setminus \{t\}) \cup \{\ldo_{|Q|+1}, v_B\}$.
Note that $N[\ldo_{|Q|+1}] \cup (N[v_B] \cap C) \subseteq N[t]$.
Combined with the fact that $S_{i-1}$ is an independent set, this implies that $S_i$ is an independent set.
Now for $i \in [|P|+4, |P|+|Q|+3]$, let $i' = i - |P| - 4$.
we have $S_i = (S_{i-1} \setminus \{\ldo_{|Q|+1-i'}\}) \cup \{\ldo_{|Q|-i'}, x_{q,v}\}$ for some $q \in Q$.
Observe that $N[\ldo_{|Q|-i'}] \subseteq N[\ldo_{|Q|+1-i'}]$.
Further, a vertex in $N[x_{q,v}] \setminus N[\ldo_{|Q|-i'}]$ is either one of $q_A, q'_A, q''_A$ or in $D$.
Since $q \in Q \subseteq B$, $q_A, q'_A, q''_A$ are not in $S$ and hence not in $S_{i-1}$.
Further, an independent set can only contains at most one vertex in $D$, and hence, $S_{i-1}$ contains no vertex in $(N[x_{q,v}] \setminus N[\ldo_{|Q|-i'}]) \cap D$.
All of the above imply that $S_i$ is an independent set, completing the inductive proof.
Then by \cref{C no loc opt}, we also obtain that  for $i \in [|P|+3, |P|+|Q|+3]$, $S_i$ is an independent set with larger weight than $S_{i-1}$. 
Therefore, overall, $\pi$ is an improving sequence.

It is easy to see that $S_{|P|+|Q|+4}$ is indeed $g(\overline{A},\overline{B})$, and for $i \in [1,|P|+|Q|+3]$, $S_i \notin R$.
Hence, $\pi$ is a improving sequence from $S$ to $g(\overline{A},\overline{B})$ as required.

We now prove (b).
Let $\pi' = (S, S'_1, S'_2, \dots, S'_r)$ be a direct sequence.
Note that every vertex in $C \setminus S$ is adjacent to all vertices in either $\{v_A, v_A', v_A''\}$ or $\{v_B, v_B', v_B''\}$ for some vertex $v$ of $G$.
Hence, if $S'_1 \setminus S$ contains a vertex in $C$, then we need to remove at least three vertices from $S$; in this case, $S'_1$ cannot differ from $S$ by a 3-swap.
Therefore, $S'_1 \setminus S$ contains only vertices in $D$.
Since an independent set of $G'$ can only contain at most one vertex in $D$, this implies that $S'_1 \setminus S$ contains exactly one vertex $z \in D$.
Without loss of generality, suppose $z$ is in the $A2B$-simulator $(\Lup, t, \Ldo)$ of some vertex~$v$ for some partition~$(P,Q) \in \mr(v)$.
Note that $z$ is adjacent to $v_A', v_A'', v_B', v_B''$, and exactly two of these vertices are in $S$.
Hence, from $S$ to $S'_1$, we remove exactly $v_{\gamma}', v_{\gamma}''$ for some $\gamma \in \{A, B\}$ and add $z$.
In order for $S'_1$ to be an independent set, all neighbors of $z$ in $G$ cannot be in $S$.
These neighbors include all vertices in the set $M_{P,Q}= \{p_A, p_A', p_A'' \mid p\in P\} \cup \{q_B, q_B', q_B'' \mid q\in Q\}$.
In other words, $P \subseteq B$ and~$Q \subseteq A$.
Hence, if $v \in B$ (i.e., $\gamma = B$), then for every $q \in Q$, $x_{q,v} \in S$.
Since we cannot remove these $x_{q,v}$, $z$ cannot be adjacent to them.
This means that $z$ must be the bottom-level of $\Ldo$.
However, the swap that removes $v_{B}', v_{B}''$ and adds $z$ is then not improving.
Therefore, $v \in A$.

Let $\Lup = \{\lup_1, \dots, \lup_{|P|+1}\}$ and $\Ldo = \{\ldo_1, \dots, \ldo_{|Q|+1}\}$, where $\lup_1$ and $\ldo_1$ are the bottom-levels of $\Lup$ and $\Ldo$, respectively.
We define the sequence $L$ and the set $S_i$ for $i \in [1,|P|+|Q|+4]$ as in the proof of part (a).
We show that the sequence $\pi = (S, S_1, \dots, S_{|P|+|Q|+4})$ is indeed~$\pi'$.

Let $Y = \Xup_{P,Q} \cup \Xdo_{P,Q} \cup \{v_A, v_B\} \cup \Lup \cup \Ldo \cup \{t\}$
We show that for each vertex $\overline{z}$ in $\Lup \cup \{t\}$ (resp., $\Ldo$), there is a unique independent set $Y^{\overline{z}}$ of $Y$ such that $\overline{z} \in Y^{\overline{z}}$, and $\omega'(Y^{\overline{z}})$ is more than $\omega'(S \cap Y)$ (resp., $\omega'(t)$).
Note that $\omega'(S \cap Y) = \alpha (4n + 2) + \sum_{p \in P} \omega(vp)$.
Consider the following three cases:

\textbf{Case 1}: \textit{$\overline{z} = \lup_i$ for some $i \in [1,|P|+1]$.}
Then $\omega'(\overline{z}) = 2\alpha + \sum_{j < i} \omega(vp_i) + i$.
Further, the non-neighbors of $\overline{z}$ in $Y$ are exactly $v_A$ and $x_{v,p_j}$ for $j \geq i$.
Since $\omega(e) \geq 2n$ for all edges $e \in E$, if $Y^{\overline{z}}$ does not contain any of these non-neighbors of $\overline{z}$, then its total weight would be less than $\omega'(S \cap Y)$.
On the other hand, if $Y^{\overline{z}}$ contains all these non-neighbors, then its weight would be at least $\omega'(S \cap Y)$.
Hence, the uniqueness of $Y^{\overline{z}}$ holds in this case.

\textbf{Case 2}: \textit{$\overline{z} = t$.}
Note that $t$ is adjacent to all vertices in $Y$, so $S$ cannot have any other vertex in $Y$.
Note that $\omega'(t) = \omega'(\ldo_{|Q|+1}) + \omega'(v_B) - 1 = \alpha(4n+2) + \sum_{q \in Q} \omega({v,q}) - |Q| - 2$.
Since $\sum_{q \in Q} \omega({v,q}) > \sum_{p \in P} \omega({v,p})$, and since $\omega(e)$ is a multiple of $2n$ for all edges $e \in E$, it follows that $\omega'(t) > \omega'(S \cap Y)$.
The uniqueness of $Y^{\overline{z}}$ then follows.

\textbf{Case 3}: \textit{$\overline{z} = \ldo_i$ for some $i \in [1,|Q|+1]$.}
Then $\omega'(\overline{z}) = 2\alpha + \sum_{j < i} \omega(vq_i) - i$.
Further, the non-neighbors of $\overline{z}$ in $Y$ are exactly $v_B$ and $x_{v,q_j}$ for $j \geq i$.
By a similar argument as in Case~1, the only possibility for $Y^{\overline{z}}$ to have weights more than $\omega'(S'_{|P|+2} \cap Y) = \omega'(t)$ is that it contains all these non-neighbors.
Hence, we also have the uniqueness of $Y^{\overline{z}}$ in this case.

By the fact that $\pi$ is an improving sequence from $S$ and a simple induction, we can characterize $Y^{L[i]}$ as $S_i \cap Y$ for $i \in [1,|P|+|Q|+3]$.
The remainder of the proof goes as follows.
Firstly, we prove by induction that (*) for $i \in [1,|P|+|Q|+3]$, $S_i \equiv S'_i$.
Lastly, we prove that (**) there is a unique 3-swap to improve $S_{|P|+|Q|+3}$.
Note that (*), (**), and the fact that $\pi$ is an improving sequence imply that $\pi$ is a prefix of $\pi'$.
Since $\pi$ is a direct sequence, it means than $\pi = \pi'$, and the statement (b) of the claim then follows.

We now show (*).
For the base case $i =1$, recall that at the beginning, we argue the swap from $S$ to $S'_1$ adds a vertex $z$ in $(\Lup, t, \Ldo)$ and removes $v_A'$ and $v_A''$.
Since the vertices in $\{t\} \cup \Ldo$ are adjacent to $v_A \in S$, and since the vertices in $\Lup \setminus \{\lup_1\}$ are adjacent to a vertex in $\{x_{v,p \mid p \in P}\} \subset S$, it follows that $z$ must be $\lup_1$.
This implies $S'_1 = S_1$.

For the inductive case $i \in [2, |P|+|Q|+3]$, let $W$ be the improving 3-swap that transforms $S_{i-1}$ into $S_i$.
Suppose that $W$ adds a vertex $z' \notin Y$.
Note that for $w \in V \setminus \{v\}$, either $\{w_A, w_A', w_A''\} \subseteq S_{i-1}$ or $\{w_B, w_B', w_B''\} \subseteq S_{i-1}$.
Hence, $W$ cannot add any vertex in the two sets, since either the vertex is already in $S_{i-1}$ or the vertex is adjacent to three vertices in $S_{i-1}$.
Similarly, for a vertex of the form $x_{u,w}$, if $u \in B \setminus \{v\}$ or $w \in A \setminus \{v\}$, then $x_{u,w}$ is also adjacent to three vertices in $S_{i-1}$ (i.e., $w_A, w_A', w_A''$ or $u_B, u_B', u_B''$).
Further, if $u \in A \setminus \{v\}$ and $w \in B \setminus \{v\}$, then $x_{u,w}$ is in $S$ and hence also in $S_{i-1}$.
As $z' \notin Y$, all of the above imply that $z' \notin C$.
Hence, $z' \in D \setminus (\Lup \cup \Ldo \cup \{t\})$.

That means $z'$ is in a $A2B$- or $B2A$-simulator of a vertex $w$ (which may be $v$) for some partition $(P',Q')$.
If $w \neq v$, then $W$ needs to remove either $w_A'$ and $w_A''$ or $w_B'$ and $w_B''$.
Further, $W$ also needs to add $z'$ and removes the vertex in $S_{i-1} \cap (\Lup \cup \Ldo \cup \{t\})$.
Hence, $W$ cannot be a 3-swap, a contradiction.
If $w = v$, then we have $(P',Q') \neq (P,Q)$.
Hence,  it is easy to see that $M_{P',Q'} \setminus M_{P,Q}$ has at least three vertices, and these vertices must be in $S_{i-1}$ (since $p \in B$ for all $p \in P$, and $q \in A$ for all $q \in Q$ as argued above).
Because these vertices are adjacent to $z'$, $W$ needs to remove them, and hence $W$ cannot be a 3-swap, a contradiction.
Therefore, $W$ does not add any vertex outside of $Y$.

This implies that $\omega'(S_i \cap Y)$ is more than $\omega'(S_{i-1} \cap Y)$, which in turn is at least $\omega(S \cap Y)$ for $i \leq |P| + 2$ and $\omega'(S_i \cap Y) \geq \omega'(t)$ for $i > |P|+2$.
By the uniqueness and characterization of $Y^{z}$ above, and by the fact that $\omega'(S_{t+1} \cap Y) > \omega'(S_t \cap Y)$ for $t \in [1,|P|+|Q|+3]$, this implies that $S'_i \cap Y = S_j \cap Y$ for some $j \in [i, |P|+|Q|+3]$.
It can be easily checked that $j = i$ is the only value in this range such that $|(S_{i-1} \cap Y) \oplus (S_j \cap Y)| \leq 3$.
Further, $|(S_{i-1} \cap Y) \oplus (S_i \cap Y)| = 3$.
Hence, there can be no change outside of $Y$, and overall we have $S'_i = S_i$.

Finally, we show (**).
Suppose there is an improving 3-swap $W$ that transform $S_{|P|+|Q|+3}$ to an independent set $\overline{S}$.
Using the same argument as in the inductive case of (*), we note that $W$ cannot add a vertex outside $Y$.
If $W$ adds a vertex $z'$ in $Y \cap D$, then by the uniqueness and characterization of $Y^{z'}$, we have $\overline{S} \cap Y = S_j \cap Y$ for some $j \in [1,|P|+|Q|+3]$.
However, since $\pi$ is improving, we know that $\omega'(S_j \cap Y) \leq \omega'(S_{|P|+|Q|+3} \cap Y)$ for all $j \in [1,|P|+|Q|+3]$.
Since $W$ does not add any vertex outside $Y$, this implies $\omega'(\overline{S}) \leq \omega'(S_{|P|+|Q|+3})$, a contradiction to $W$ being improving.
Hence, $W$ can only add vertices in $Y \cap C$ and $\overline{S}$ contains no vertex in $D$.
Because of the biclique formed by $\{v_A, v_A', v_A''\}$ and $\{v_B, v_B', v_B''\}$ and the biclique by $\{x_{q,v} \mid q \in Q\}$ and $\{x_{v,p} \mid p \in P\}$, the only maximal independent sets of $Y \cap C$ are $\{v_A, v_A', v_A''\} \cup \{x_{v,p} \mid p \in P\}$ and $\{v_B, v_B', v_B''\} \cup \{x_{q,v} \mid q \in Q\}$.
Only the latter has weight more than $\omega'(S_{|P|+|Q|+3})$ and only by one.
This implies that $\overline{S} \cup Y = \{v_B, v_B', v_B''\} \cup \{x_{q,v} \mid q \in Q\}$.
In order for this to happen, $W$ has to remove $\ldo_1$ and add $v_B'$ and $v_B''$.
This means, $W$ does not change any vertex outside $Y$.
In other words, there is a unique way to improve $S_{|P|+|Q|+3}$.
This completes the proof of this claim.
\end{claimproof}

We now formally define the reduction.
Let $\plsredins$ be the function that maps an instance $I$ of \plsprob{Max Cut}{Flip} to the instance $I'$ of \plsprob{Weighted Independent Set}{3-Swap} as described above.
Note that for an independent set $S$ of $G'$ and for every vertex $v \in V$, at most one of the sets $\{v_A, v_A', v_A''\}$ and $\{v_B, v_B', v_B''\}$ has a nonempty intersection with $S$.
We define the solution mapping function $\plsredsol(I, S)$ as follows: For every vertex $v \in V$, if $\{v_A, v_A', v_A''\} \cap S \neq \emptyset$, then we add $v$ to $A$; otherwise we add $v$ to $B$. 
We then assign $(A,B)$ to be $\plsredsol(I, S)$.
It is easy to see that $\plsredins$ and $\plsredsol$ can be computed in polynomial time.
Combined with Claims~\ref{claim:R_all_lopt} and~\ref{claim:lopt_to_lopt}, this implies that $(\plsredins, \plsredsol)$ is a \pls-reduction.

Finally, we show that the set $R$ above satisfies all the conditions stipulated in \cref{def:tight_pls_complete}.
By \cref{claim:R_all_lopt}, $R$ contains all local optima, and hence it satisfies the first condition.
For every partition $(A,B)$ of $V$, we can construct $g(A,B)$ in polynomial time, and note that $\plsredsol(I, g(A,B)) = (A,B)$.
Therefore, the second condition is satisfied.
Lastly, \cref{claim:direct_sequence} implies the third condition of \cref{def:tight_pls_complete}.
Overall, this means that $(\plsredins, \plsredsol)$ is a tight \pls-reduction.
\end{proof}

We are now ready to prove \cref{thm:wis_all_exp}.

\begin{proof}[Proof of \cref{thm:wis_all_exp}]
Since a tight \pls-reduction preserves the all-exp property~\cite{schaffer1991}, \cref{thm:wis_all_exp} immediately follows from the \cref{thm:tight_to_wis}, the fact that \plsprob{Max Cut}{Flip} with maximum degree four and positive weights has the all-exp property~\cite{MonienT10}, and the observation that all weights of the \plsprob{Weighted Independent Set}{3-Swap} instance in the reduction are positive.
\end{proof}

\section{A new notion of reduction}
\label{sec:newred}
In this section, we define a new notion of \pls-reduction that transfers the result in \cref{thm:wis_hard} to other problems. 
We also demonstrate it by describing such a reduction from \pivotprob{Weighted Independent Set}{3-Swap} to \pivotprob{Max-Circuit}{Flip}, which is a special case of \pivotprob{Weighted Circuit}{Flip}.

\subsection{$\ell$-tight \pls-reduction}

\begin{definition}[$\ell$-tight \pls-reduction]
\label{def:l_tight_reduction}
	An \emph{$\ell$-tight \pls-reduction} from a problem $P \in \pls$ to a problem $Q \in \pls$ is a \pls-reduction $(\plsredins,\plsredsol)$ from $P$ to $Q$ such that for every instance $I \in D_P$, we can choose a subset $R$ of $F_Q(\plsredins(I))$ that satisfies:
	\begin{enumerate}
		\item $R$ contains all local optima of $\plsredins(I)$;
		\item For every solution $s_p \in F_{P}(I)$, we can construct in polynomial time a solution $s_q \in R$ so that $\plsredsol(I,s_q) = s_p$;
		\item If there is a path from $s_q \in R$ to $s'_q \in R$ in the transition graph $T_{\plsredins(I)}$ such that there are no other solutions in $R$ on the path, then either there is an edge from the solution $\plsredsol(I,s_q)$ to the solution $\plsredsol(I,s'_q)$ in the transition graph $T_I$ or these two solutions are identical;
		\item For two solutions $s_q \in R$ and $s'_q \in R$ such that either $\plsredsol(I,s_q) = \plsredsol(I,s'_q)$ or there is an edge $(\plsredsol(I,s_q),\plsredsol(I,s'_q))$ in the transition graph $T_I$, the shortest path from $s_q$ to $s'_q$ is at most $\ell$.
	\end{enumerate} 
\end{definition}

Note that the first three conditions are identical to the tight \pls-reduction in \cref{def:tight_pls_complete}.
While tight \pls-reduction forbids introduction of shortcuts in the transition graph, the distance between two solutions can be arbitrarily increased.
Our new condition 4 limits this increase.

\begin{theorem}
\label{thm:reduction}
	Let $\PP$ and $\PP'$ be two local search problems in \pls.
	Suppose there exists an $\varphi(\ell)$-tight \pls-reduction from $\PP$ to $\PP'$ for some integer $\ell$ and polynomial function $\varphi$.
	If there exists an algorithm to solve $\PP'[\textsc{Pivot}]$ in \FPT-time when parameterized by $\ell$, then there also exists such an algorithm to solve $\PP[\textsc{Pivot}]$.
\end{theorem}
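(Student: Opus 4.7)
The plan is to pull the input $(I,s)$ of $\PP[\textsc{Pivot}]$ forward through the reduction, bound the distance to a local optimum in the reduced instance by a polynomial in $\ell$, apply the \FPT{} algorithm for $\PP'[\textsc{Pivot}]$, and push the resulting sequence back through $\plsredsol$.

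Concretely, I would compute $I' := \plsredins(I)$ in polynomial time and use condition~2 of the $\varphi(\ell)$-tight reduction to construct some $s'_0 \in R$ with $\plsredsol(I, s'_0) = s$. The heart of the argument is showing that the distance from $s'_0$ to the nearest local optimum of $T_{I'}$ is at most $(\ell+1)\varphi(\ell)$. Let $s = p_0, p_1, \ldots, p_\ell = p^*$ be a shortest improving path in $T_I$ with $p^*$ a local optimum. For each $i \in \{0,1,\ldots,\ell\}$ I would choose $r_i \in R$ with $\plsredsol(I, r_i) = p_i$ via condition~2, setting $r_0 := s'_0$. Condition~4 then guarantees, for each consecutive pair $(r_i, r_{i+1})$, an improving path in $T_{I'}$ of length at most $\varphi(\ell)$, since their images are adjacent in $T_I$. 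Concatenating yields a path of length at most $\ell \cdot \varphi(\ell)$ from $s'_0$ to $r_\ell$. Continuing to pivot from $r_\ell$ must reach some local optimum $r^*$ of $T_{I'}$, which lies in $R$ by condition~1. Since $p^*$ admits no outgoing improving edge in $T_I$, applying condition~3 to successive $R$-solutions on that pivoting walk forces every such $R$-solution---in particular $r^*$---to map to $p^*$, so condition~4 supplies an improving path from $r_\ell$ to $r^*$ of length at most $\varphi(\ell)$, completing the bound.

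Equipped with this bound, the assumed \FPT{} algorithm for $\PP'[\textsc{Pivot}]$, invoked on $(I', s'_0)$ with parameter $(\ell+1)\varphi(\ell) = \poly{\ell}$, runs in time $g(\ell) \cdot |I|^{\Oh(1)}$ for some computable $g$ and outputs a maximal improving sequence $\pi'$ in $T_{I'}$. To pull $\pi'$ back, I would extract the subsequence $\pi'_R$ of its $R$-entries (which contains $s'_0$ and ends at the final local optimum of $\pi'$, which is in $R$ by condition~1), apply $\plsredsol$ elementwise, and collapse consecutive duplicates to obtain $\pi$. By condition~3 any two consecutive entries of $\plsredsol(\pi'_R)$ are either identical or joined by an edge in $T_I$, so $\pi$ is a well-defined improving sequence starting at $s$; and since $\plsredsol$ sends local optima of $T_{I'}$ to local optima of $T_I$ by the underlying \pls-reduction property, $\pi$ ends at a local optimum, so it is maximal.

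The main obstacle is the distance bound, specifically handling the tail from $r_\ell$ to an actual local optimum of $T_{I'}$: a priori it is not clear why this tail should be short, and the argument crucially combines condition~3 (which, using that $p^*$ has no improving neighbors, confines all subsequently reachable $R$-solutions to the fiber $\plsredsol^{-1}(p^*)$) with condition~4 (which bounds distances within that fiber). Once this bound is in place, the remaining steps are essentially bookkeeping to verify that the pulled-back sequence $\pi$ is indeed maximal and improving.
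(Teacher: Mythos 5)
Your proof is correct and follows essentially the same strategy as the paper: lift $s$ to $s'_0 \in R$ via condition~2, bound the distance in $T_{I'}$ to a local optimum by a polynomial in $\ell$ using condition~4, run the assumed \FPT{} algorithm on $(I', s'_0)$, and pull back the output through $\plsredsol$ on the $R$-entries using condition~3. The main difference is that you spell out the distance-bound argument in detail---in particular the ``tail'' from $r_\ell$ to an actual local optimum $r^*$ of $T_{I'}$, showing via conditions~1 and~3 that $r^*$ must still lie in the fiber over $p^*$ so condition~4 bounds that last leg---whereas the paper compresses this into a single citation of condition~4 and states the bound $\varphi(\ell)\cdot\ell$; your $(\ell+1)\varphi(\ell)$ is the more carefully justified bound and yields the same asymptotic conclusion.
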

\begin{proof}
	Suppose there exists an algorithm $A'$ that solves $\PP'[\textsc{Pivot}]$ in time $g(\ell) \cdot \poly{n}$ for some computable function $g$.	Let $(I, s)$ be an instance of $\PP[\textsc{Pivot}]$ (i.e., $I$ is an instance of $\PP$, $s$ a solution of $I$), and $\ell$ be the promised distance from $s$ to a local optimum.
	Let $(\plsredins,\plsredsol)$ be the $\varphi(\ell)$-tight \pls-reduction given in the lemma statement.
	Let $R$ be the set as guaranteed by \cref{def:l_tight_reduction} for the reduction $(\plsredins,\plsredsol)$ and instance $I$ of $\PP$.
	Note that since $(\plsredins,\plsredsol)$ is a \pls-reduction, in time polynomial in $|I|$, we can obtain the solution $I' := \plsredins(I)$ of $\PP'$. 
	By Condition 2 of \cref{def:l_tight_reduction}, in polynomial time, we can also obtain solution $s' \in R$ such that $\plsredsol(I,s') = s$.
	Note that if the promise is not violated, then by Condition 4 of \cref{def:l_tight_reduction}, there exists a path from $s'$ to a local optimum of $I'$ of length at most $\phi(\ell) \ell$.
	Then using the algorithm $A'$, we can find such a path in time $g(\varphi(\ell) \ell) \cdot \poly{|I'|}$.
	Note that the start and end of this path are in $R$, and the end of this path corresponds to a local optimum of $I$.
	Hence, applying Condition 3 of \cref{def:l_tight_reduction}, this corresponds to a path of length at most $\varphi(\ell) \ell$ from $s$ to a local optimum of $I$.
	In other words, we can solve $(I, s, \ell)$ in time $g'(\ell) \cdot \poly{|I|}$ for some computable function $g'$. 
\end{proof}

\begin{remark}
	\cref{claim:direct_sequence} implies that the 
 
reduction in the proof of \cref{thm:tight_to_wis} is an $\Oh(\Delta)$-tight \pls reduction.
\end{remark}

\subsection{Reduction to \plsprob{Max-Circuit}{Flip}}

Having formalized $\ell$-tight \pls-reductions and established their ability to transfer the desired lower bound, we proceed to our showcase of how this reduction can be applied to local search problems beyond \plsprob{Subset Weight Optimization Problem}{$c$-Swap}.

\begin{theorem}
	There is a $\Oh(c)$-tight \pls reduction from \plsprob{Subset Weight Optimization Problem}{$c$-Swap} to \plsprob{Max-Circuit}{Flip}.
\end{theorem}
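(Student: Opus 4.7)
The plan is to construct a circuit whose input bits split into \emph{subset bits} $(b_u)_{u\in V\cup E}$, encoding a candidate subset $S=\{u:b_u=1\}$, and a small bank of \emph{auxiliary bits} that schedule the intermediate steps of a single $c$-swap. Because $\sigma$ and the weighted sum defining $\omega(S)$ are polynomial-time computable, both are realised by polynomial-size sub-circuits. I would take $R$ to consist of the \emph{canonical} configurations---those with all auxiliary bits equal to zero---and let $\plsredsol$ return the subset encoded by the subset bits; this immediately handles conditions~1 and~2 of \Cref{def:l_tight_reduction}, provided the circuit is designed so that no non-canonical configuration is a local optimum.

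Using the fact that Max-Circuit/Flip weights output gate $i$ by $2^{i-1}$, I would arrange the output gates in a strict hierarchy of three blocks: a top block of value $M\cdot \sigma(S^*)+\omega(S^*)$ for the \emph{effective} subset $S^*$ implied by the current configuration, a middle block encoding the stage of any in-progress swap, and a low block encoding a unary progress counter that advances one tick per scheduled auxiliary-bit flip. By separating the blocks with large enough gaps, any flip that improves a higher block dominates every possible change in the lower blocks.

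A single improving $c$-swap from $S$ to $S'$ is then simulated by $\Oh(c)$ flips organised in stages: first mark the at most $c$ elements of $W:=S\oplus S'$ one auxiliary bit at a time (strictly advancing the progress block); next flip a stage-advance bit that re-interprets $S^*$ as $S\triangle W$, which strictly improves the top block precisely when $W$ is a genuine improving $c$-swap over $S$; then commit $W$ into the subset bits one flip at a time; and finally clear the markers. Each of these flips strictly improves exactly one block while preserving the strictly higher-priority blocks, and the total length of the simulated path is $2|W|+\Oh(1)=\Oh(c)$, giving condition~4 of \Cref{def:l_tight_reduction} with $\ell=\Oh(c)$.

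The main obstacle is condition~3: ruling out ``shortcut'' improving paths that leave $R$, wander through non-canonical configurations, and re-enter $R$ at a subset not reachable from the origin by an improving $c$-swap. The hierarchical weighting is tailored to this: any detour either decreases a higher-priority block (and so is not improving) or stays within a single stage of a single pre-committed swap (and so must complete as designed). Verifying this carefully---choosing the penalty constant $M$ and the inter-block gaps so that the three blocks remain strictly comparable, and certifying that each non-canonical neighbourhood admits exactly the intended improving flip---is the technical heart of the argument. With these invariants in place, both condition~3 and the $\Oh(c)$ bound of condition~4 follow by a direct case analysis on the current stage.
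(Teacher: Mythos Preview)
Your high-level plan---simulate one $c$-swap by $\Oh(c)$ flips using auxiliary marker bits, a stage-advance control, and a lexicographic output hierarchy---matches the paper's in spirit, but the specific mechanism you describe has a gap in the commit/clear phase. With $S^*$ computed as (subset bits) $\oplus$ (markers) once the stage bit is set, flipping a single subset bit $b_u$ for some $u\in W$ (or clearing the marker for $u$) changes $S^*$ by exactly the element $u$, so the top block moves from $M\sigma(S')+\omega(S')$ to $M\sigma(S'\oplus\{u\})+\omega(S'\oplus\{u\})$, which in general is smaller (indeed $S'\oplus\{u\}$ need not even satisfy $\sigma=1$). Since committing and clearing cannot be fused into one flip, no ordering of the last $2|W|$ steps keeps the top block monotone under your encoding. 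The paper avoids this by carrying \emph{two full $n$-bit copies} of the solution plus two control bits: the value is anchored to the first copy while the second is rewritten bit by bit to the target $w\in B(u)$; a control-bit flip then shifts the anchor to the now-complete second copy, after which the first copy is rewritten. At no point does the anchored weight depend on a half-rewritten string, which is precisely what the marker-XOR scheme cannot guarantee.

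A second, smaller gap concerns condition~3: nothing in your scheme forces $|W|\le c$. Each marking step touches only the low block, so an improving path could set more than $c$ markers before the stage advance, and a single excursion out of $R$ could then realise an improving swap of size larger than $c$ that is not an edge of $T_I$. The paper handles this by declaring a configuration \emph{structured} only when the second copy lies on the canonical bit-flip path from the first copy $u$ to some $w\in B(u)$---which in particular bounds the Hamming distance by $c$---and penalising every unstructured string below all structured ones. It then takes $R$ to be all structured strings rather than only the canonical ones, which reduces condition~3 to a one-step check between adjacent structured strings and makes the $(4c{+}4)$ bound for condition~4 immediate. Your block hierarchy is compatible with such a fix, but as written it is missing.
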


\begin{proof}
	We adapt the proof of \pls-reduction from any problem in \pls to \plsprob{Min-Circuit}{Flip} in \cite{MR2450938}.
	Let $\mathcal{I} = (G, \omega, \sigma)$ be an instance of \plsprob{Subset Weight Optimization Problem}{$c$-Swap}.
	Let $G = (V, E)$ and $n := |V| + |E|$.
	We now construct a Boolean circuit $D$ as follows.
	Note that we can encode any solution of $\mathcal{I}$ in $n$ bits, where each bit is associated with a vertex/edge in $V \cup E$ and is 1 if and only if that vertex/edge is in the solution.
	For a length-$n$ string $u$ that encodes a solution $s$ of $\mathcal{I}$, let $\omega(u) := \omega(s)$, and let $B(u)$ be the set of strings encoding the solutions that can be obtained by an improving $c$-swap from $s$.
	For two strings $u$ and $w$, let $H(u,w)$ be the Hamming distance between $u$ and $w$, and let $V(u,w)$ be the set of $H(u,w)$ strings that are obtained if we change $u$ into $w$ by flipping the bits in which $u$ differs from $w$ in increasing order of their positions.
	We define a function $h : \{0,1\}^{V \cup E} \to \Q$ as follows.
	For a length-$n$ string $u$ that encodes a solution of $\mathcal{I}$, we define the value of $h$ for some \emph{structured} strings below:
	\begin{itemize}
		\item $h(uu00) = (2n+4) \omega(u)$,
		\item $h(uv00) = (2n+4) \omega(u) + H(u,v)$, where $v$ is a string in $V(u,w)$ for some $w$ in $B(u)$,
		\item $h(uw10) = (2n+4) \omega(u) + n + 1$, for $w \in B(u)$,
		\item $h(vu11) = (2n+4) \omega(u) - H(u,v) - 2$, where $v$ is a string in $V(w,u)$ for some $w$ such that $u \in B(w)$, and
		\item $h(uu10) = (2n+4) \omega(u) - 1$.
	\end{itemize}
	For an unstructured string $x$ (i.e., a string that is not of any form above), $h(x)$ has a value equal to $\mu$ minus the number of ones in $x$, where $\mu$ is smaller than any value for any structured string.
	Then the Boolean circuit $D$ is the one with $2n+2$ input nodes and computes the value $h(x)$ for a string $x$ of length $2n+2$.
	Such a circuit with polynomial size exists, as $h$ can be computed in polynomial time~\cite[Theorem 6.3]{MR2450938}.

	The description above constitutes the function $\plsredins$ that maps an instance $\mathcal{I}$ of \plsprob{Subset Weight Optimization Problem}{$c$-Swap} to an instance $D$ of \plsprob{Max-Circuit}{Flip}.
	The function $\plsredsol$ that maps a solution of $D$ to that of $\mathcal{I}$ is as follows: For a structured string $x$ of the form $uv00$ or $uv10$, $\plsredsol(\I, x) = u$ ($u$ and $v$ may be identical).
	For a structured string $x$ of the form $vu11$, we assign $\plsredsol(\I, x) = u$.
	Lastly, if $x$ is unstructured, $\plsredsol(\I, x) = 0 \dots 0$.
	
	It is easy to see that unstructured strings cannot be a local optimum.
	Further, among the structures strings, only the string of the form $uu00$ can be locally optimal, and this is if and only if $u$ encodes a local optimum of $\I$.
	Hence, $(\plsredins, \plsredsol)$ is a \pls-reduction.
	
	Now, consider the set $R$ as the set of all structured strings.
	We argue that the set $R$ satisfies the conditions in \cref{def:l_tight_reduction}.
	For the first condition, as argued above, only structured strings can be a local optimum of $D$, and hence $R$ contains all local optima.
	For the second condition, we map every solution $u$ of $\I$ to the solution $uu00$ in $R$.
	For the third condition, observe that an improvement step from a structured string can yield only a structured string.
	Hence, we only need to consider two adjacent structured strings $x_1$ and $x_2$ such that there is an edge from $x_1$ to $x_2$ in the transition graph $T_{D}$.
	By the choice of the function $\plsredsol$ and by construction, it follows that $\plsredsol(\I, x_1)$ and $\plsredsol(\I, x_2)$ are either identical or there is an edge from $\plsredsol(\I, x_1)$ to $\plsredsol(\I, x_2)$ in the transition graph $T_{\I}$.
	Hence, the third condition is satisfied.
	Lastly, consider a length-$n$ string $u$.
	The structured strings $x$ such that $\plsredsol(\I, x) = u$ either have the form $uv00$, $uv10$, or $vu11$ for some $v$ such that $H(u,v) \leq c$.
	Hence, for any two structured strings $x_1$ and $x_2$ such that either $\plsredsol(\I, x_1) = \plsredsol(\I, x_2)$, we must have $H(x_1, x_2) \leq 2c + 2$.
	Similarly, if there is an edge $(\plsredsol(\I, x_1), \plsredsol(\I, x_2))$ in $T_{\I}$, then we must have $H(x_1, x_2) \leq 4c + 4$.
	
	Overall, this shows that $(\plsredins, \plsredsol)$ is a $(4c+4)$-tight \pls-reduction.
\end{proof}

Combined with Theorems~\ref{thm:wis_hard} and \ref{thm:reduction}, the theorem above immediately implies the following.

\begin{corollary}
\label{cor:wis_hard_circuit}
	Unless $\FPT = \W[1]$, there does not exist an algorithm to solve \pivotprob{Max-Circuit}{Flip} in $g(\ell) \cdot  \poly{n}$ for any computable function $g$.
\end{corollary}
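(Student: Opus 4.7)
The plan is to chain together three pieces that have already been established in the excerpt: the $\W[1]$-hardness of \pivotprob{Weighted Independent Set}{3-Swap} from \cref{thm:wis_hard}, the $\Oh(c)$-tight \pls-reduction from \plsprob{Subset Weight Optimization Problem}{$c$-Swap} to \plsprob{Max-Circuit}{Flip} stated immediately before the corollary, and the parameter-preserving transfer mechanism of \cref{thm:reduction}. Since all of these statements are already proved, the corollary should follow essentially by assembling them.

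First, I would observe that \plsprob{Weighted Independent Set}{3-Swap} is the special case of \plsprob{Subset Weight Optimization Problem}{$c$-Swap} obtained by fixing $c=3$, taking zero edge-weights, and letting the certifying function $\sigma$ check independence. Consequently, the lower bound in \cref{thm:wis_hard} immediately carries over: unless $\FPT=\W[1]$, there is no algorithm solving \pivotprob{Subset Weight Optimization Problem}{$3$-Swap} in time $g(\ell)\cdot\poly{n}$ for any computable function $g$.

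Second, I would instantiate the $\Oh(c)$-tight \pls-reduction with $c=3$. This yields an $\Oh(1)$-tight \pls-reduction from \plsprob{Subset Weight Optimization Problem}{$3$-Swap} to \plsprob{Max-Circuit}{Flip}; that is, in the language of \cref{def:l_tight_reduction}, the function $\varphi$ is a constant. Applying \cref{thm:reduction} with this $\varphi$ transfers fpt-tractability in the pivoting formulation from the target problem back to the source: any algorithm solving \pivotprob{Max-Circuit}{Flip} in time $g(\ell)\cdot\poly{n}$ would, via the reduction, produce an algorithm for \pivotprob{Subset Weight Optimization Problem}{$3$-Swap} running in time $g(\Oh(\ell))\cdot\poly{n}$, which is still of the form $g'(\ell)\cdot\poly{n}$.

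Combining the two steps: an fpt algorithm for \pivotprob{Max-Circuit}{Flip} parameterized by~$\ell$ would contradict the hardness established in the first step, hence contradict $\FPT\neq\W[1]$. There is no genuine obstacle here---the only point that requires attention is verifying that the constant $\varphi$ from the $\Oh(c)$-tight reduction keeps the distance blow-up in $\ell$ bounded by a function of $\ell$ alone, so that the parameter $\ell$ for the source instance remains a valid parameter for the target instance. This is guaranteed by the fourth condition of \cref{def:l_tight_reduction}, which is exactly what the preceding theorem's reduction was designed to satisfy.
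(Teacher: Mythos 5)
Your proof is correct and matches the paper's argument exactly: the paper's proof of this corollary is just the single sentence ``Combined with Theorems~\ref{thm:wis_hard} and \ref{thm:reduction}, the theorem above immediately implies the following,'' and your write-up supplies precisely the chain (specialize to $c=3$, note the $\Oh(1)$-tightness, apply \cref{thm:reduction} in the contrapositive) that this sentence compresses.
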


\section{Fixed-parameter Algorithms}
\label{sec:fptalgo}

In this section, we establish the algorithmic upper bounds that form the foundation of Main Finding~\ref{main:positive}. We recall that all results obtained in this section can also be directly translated to the \pls\ and standard algorithm problem formulations.

\subsection{\plsprob{Subset Weight Optimization Problem}{$c$-Swap}}
In this section, we show algorithmic results for problems in \pivotprob{Subset Weight Optimization Problem}{$c$-Swap} parameterized by the shortest distance $\ell$, swap size $c$ and the number~$k$ of distinct weights.
We start with an \XP algorithm parameterized by $\ell$ and $c$.

\begin{theorem}
	\label{thm:xp_subset}
	\pivotprob{Subset Weight Optimization Problem}{$c$-Swap} can be solved in time $\Oh(n^{c\ell})$, where $n$ is the instance size.
\end{theorem}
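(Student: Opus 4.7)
The plan is a straightforward bounded depth-first exhaustive search in the transition graph. I would design a recursive procedure \textsc{Search}$(T,d)$ that takes a feasible solution~$T$ and a depth budget~$d\geq 0$, and returns a maximal improving sequence from~$T$ of length at most~$d+1$ (or \textsc{Failure} if none exists within the budget). The base cases are: if~$T$ has no improving $c$-swap neighbor, return the singleton sequence $(T)$; otherwise, if $d=0$, return \textsc{Failure}. In the recursive step, I would enumerate every subset~$X\subseteq V\cup E$ with $|X|\leq c$---there are $\Oh(n^c)$ of them---and for each, form~$T':=T\oplus X$ and test in polynomial time whether $\sigma(T')=1$ and $f(T')>f(T)$. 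For each improving neighbor~$T'$, recursively call \textsc{Search}$(T',d-1)$; if some call succeeds with sequence~$\pi$, return~$T$ prepended to~$\pi$. To solve the given instance, invoke \textsc{Search}$(S,\ell)$, where~$\ell$ is either provided on the input or is searched for via iterative deepening over $d=1,2,\ldots$

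Correctness will follow from the promise that the distance from $S$ to the nearest local optimum is at most $\ell$: there exists an improving path $S=S_0,S_1,\ldots,S_r$ with $r\leq\ell$ and $S_r$ locally optimal, and since the procedure explores every improving neighbor at each level, at least one branch of the recursion must reach a locally optimal solution within the budget. By construction, the returned sequence is then a valid maximal improving sequence starting from~$S$.

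For the running time, the recursion tree has depth at most~$\ell$, and each internal node has out-degree bounded by the number of $c$-swaps, namely $\Oh(n^c)$. Hence the total number of recursive calls is $\Oh(n^{c\ell})$. Since each call performs only polynomial bookkeeping (evaluating $\sigma$, $\omega$, and comparing objective values over at most $\Oh(n^c)$ neighbors), the overall running time is $\Oh(n^{c\ell})$ after absorbing the polynomial overhead; iterative deepening, if used, contributes only a geometric factor and preserves this asymptotic bound.

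The proof is technically routine and I do not anticipate a serious obstacle---it is a textbook \XP-style exhaustive-branching argument. The only minor subtleties are handling the promise formulation of the parameter via iterative deepening and verifying that the naive neighborhood enumeration matches the claimed branching bound; both are immediate given the framework set up in Section~\ref{sec:framework}.
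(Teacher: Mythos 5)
Your proposal is correct and matches the paper's proof in essence: both bound the out-degree of the transition graph by $\Oh(n^c)$ and perform a depth-$\ell$ exhaustive search from the starting solution, giving $\Oh(n^{c\ell})$ explored nodes. You merely spell out the recursion and the iterative-deepening handling of the promise parameter, which the paper leaves implicit.
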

\begin{proof}
	Consider an instance $(G, \omega)$ of a problem in \pivotprob{Subset Weight Optimization Problem}{$c$-Swap}.
	Let $n := |V(G)| + |E(G)|$.
	By definition, any solution of $\I$ has at most $\Oh(n^c)$ neighbors in the transition graph.
	Hence, for any solution $Q$ of $\I$, there are at most $\Oh(n^{c\ell})$ solutions that are reachable from $Q$ via a path of the length at most $\ell$ in the transition graph.
	Exploring these solutions in a depth-first search fashion, we obtain an algorithm with the claimed running time.
\end{proof}

Next, we consider the case when both the swap size and the number of distinct weights are bounded in the following theorem.
We start with a straightforward \XP algorithm for this parameter following a simple observation below.

\begin{observation}
\label{ob:fpt_longest}
	For an instance $I$ of a problem $\PP \in \pls$ and a solution $s$ of $I$, let $\ell_{\max}$ be the longest distance from $s$ to a local optimum.
	Then the instance $(I,s)$ of $\PP[\textsc{Pivot}]$ can be solved in $\ell_{\max} \cdot \poly{n}$. 
\end{observation}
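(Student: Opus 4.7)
The plan is to invoke the standard local search algorithm directly using the neighbor-finding routine $C_P$ that is guaranteed by membership of $\PP$ in \pls. Starting from the given initial solution $s = s_0$, I repeatedly call $C_P$ on the current solution $s_i$: if $C_P$ returns an improving neighbor $s_{i+1}$, I extend the sequence by $s_{i+1}$ and recurse; if $C_P$ returns ``locally optimal'', the current sequence $s_0, s_1, \dots, s_i$ is already maximal, and I output it. No cleverness about the choice of pivoting rule is needed; any improving neighbor returned by $C_P$ will do.

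The correctness of the output is immediate from the definitions: each call produces a strictly improving neighbor, so the sequence is an improving sequence, and termination occurs precisely when the current solution is locally optimal. For the runtime bound, observe that the transition graph $T_I$ is a DAG, because every edge goes from a solution to one of strictly better objective value, and so no solution can be revisited along any improving sequence. Therefore, the path produced by the algorithm is a simple path in $T_I$ starting at $s$ and ending at a local optimum, so its length is at most~$\ell_{\max}$ by the definition of~$\ell_{\max}$ as the longest such distance. Each iteration performs one invocation of $C_P$ (polynomial by the \pls\ definition) together with constant bookkeeping, giving a total running time of $\ell_{\max} \cdot \poly{n}$, as claimed.

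There is no real obstacle here; the only mild subtlety is emphasizing that the length bound applies regardless of the pivoting rule used by $C_P$, which is exactly what the acyclicity of the transition graph provides. This is also why the observation places the parameter on the \emph{longest}, rather than shortest, distance from $s$ to a local optimum: an arbitrary pivoting rule may take a long path, but never longer than the worst case $\ell_{\max}$.
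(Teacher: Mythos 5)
Your proof is correct and follows the same approach as the paper: run the standard local search algorithm using the neighbor-finding routine $C_P$ guaranteed by \pls-membership, and bound the number of iterations by $\ell_{\max}$. The only difference is that you spell out the acyclicity of the transition graph to justify both termination and the length bound, which the paper leaves implicit; this is a harmless elaboration rather than a different route.
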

\begin{proof}
	Since $\PP$ is in \pls, for any solution $s'$, we can conclude that it is locally optimal or find a better solution in polynomial time.
	Since any improving sequence from $s$ has length at most $\ell_{\max}$, the statement then follows.
\end{proof}

\begin{corollary}
	\pivotprob{Subset Weight Optimization Problem}{$c$-Swap} can be solved in time $\Oh(n^{2k + c})$, where $n$ is the number of vertices and~$k$ is the number of distinct weights.\end{corollary}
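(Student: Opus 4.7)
The plan is to derive the corollary as a direct consequence of Observation~\ref{ob:fpt_longest}: I just need to bound the longest distance $\ell_{\max}$ from any initial solution to a local optimum by $\Oh(n^{2k})$, and the polynomial overhead of a single pivoting step (coming from enumeration of the $c$-swap neighborhood) will supply the remaining $\Oh(n^{c})$ factor in the stated running time.

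For each solution $S \subseteq V \cup E$, I would define its \emph{type vector} $\tau(S) := (n_1, \ldots, n_k) \in \NN^k$, where $n_i$ records the number of elements of $S$ whose weight equals the $i$-th distinct value attained by $\omega$. The objective $\omega(S) = \sum_{i=1}^{k} n_i \cdot w_i$ then depends only on $\tau(S)$, so any two solutions with the same type vector must have identical weight. Along any improving sequence $S_0, S_1, \ldots, S_r$, the weights are strictly monotone, and hence the type vectors $\tau(S_0), \ldots, \tau(S_r)$ must be pairwise distinct. Since each coordinate of $\tau(S)$ lies in $\{0, 1, \ldots, |V|+|E|\}$ and $|V|+|E| = \Oh(n^2)$, the total number of type vectors is at most $(|V|+|E|+1)^k = \Oh(n^{2k})$, forcing $r = \Oh(n^{2k})$ and in particular $\ell_{\max} = \Oh(n^{2k})$ for every choice of initial solution.

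Plugging this bound into Observation~\ref{ob:fpt_longest} and accounting for the $\Oh(n^{c})$ cost of locating a better neighbor in the $c$-swap neighborhood of a given solution yields the claimed $\Oh(n^{2k+c})$ running time. There is no substantial technical obstacle to this argument; the only conceptual step is identifying the type vector as the right injective invariant along any improving walk, after which the length bound follows from a routine counting estimate.
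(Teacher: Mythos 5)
Your proposal is correct and essentially identical to the paper's argument: the paper also bounds the number of distinct objective values by $\Oh(n^{2k})$ (it phrases this as counting distinct subset sums over a ground set of size $\Oh(n^2)$ with $k$ distinct weights, which is exactly your type-vector count), then invokes Observation~\ref{ob:fpt_longest} together with the $\Oh(n^c)$ neighborhood-search cost. Your explicit introduction of the type vector $\tau(S)$ as the injective invariant along improving sequences is just a slightly more spelled-out version of the same counting step.
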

\begin{proof}
	It is easy to see that given a set $R$ of $\Oh(n^2)$ numbers among $k$ different choices of values, the set $\{ \sum{s \in S \mid S \subseteq R}\}$ has $\Oh(n^{2k})$ elements.
	This implies that there are $\Oh(n^{2k})$ possible objective values for the \pivotprob{Subset Weight Optimization Problem}{$c$-Swap} instance.
	Since all solutions in an improving sequence must have pairwise distinct objective values, it follows that the length of the sequence is $\Oh(n^{2k})$.
	Combined with \cref{ob:fpt_longest}, the corollary then follows by the fact that it takes $\Oh(n^c)$ time to find a better solution in the local neighborhood, if one exists.
\end{proof}

Next, we show an \FPT-algorithm (\cref{cor:fpt_diff_weights} below) based on the following result, which already had a big impact on kernelization results for weighted optimization problems~\cite{EKMR17}.

\begin{theorem}[Frank and Tardos~\cite{MR905151}]
\label{thm:frank_tardos}
	Given a rational vector $w \in \Q^k$ and an integer $N$, there is a polynomial-time algorithm to output an integral vector $\overline{w} \in \Z^k$ such that $|\overline{w}|_{\infty} \leq 2^{4k^3} N^{k(k+2)}$ and $\sign(w^{\top} b) = \sign(\overline{w}^{\top} b)$ for any integer vector $b$ with $|b|_1 \leq N - 1$.
\end{theorem}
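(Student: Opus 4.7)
The stated theorem is the classical result of Frank and Tardos~\cite{MR905151}, so the plan is not to rederive it from scratch but to sketch the standard proof via simultaneous Diophantine approximation, realized by the Lenstra--Lenstra--Lov\'asz (LLL) lattice basis reduction algorithm.

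First I would normalize: assume without loss of generality that $w_1$ is the nonzero coordinate of largest absolute value (zero coordinates can simply be left alone in $\overline{w}$), and set $w'_i := w_i / w_1$. The task reduces to finding a positive integer $Q$ and integers $p_2, \dots, p_k$ with $|Q w'_i - p_i| < \delta$ for a tolerance $\delta$ to be fixed later, subject to $Q$ and $|p_i|$ staying within the target bound $2^{4k^3} N^{k(k+2)}$.

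Next I would construct a $k$-dimensional integer lattice whose basis encodes this simultaneous approximation problem (diagonal entries scaling the unknowns $Q, p_2, \dots, p_k$, and an extra column weighting the discrepancies $Q w'_i - p_i$, with an overall scaling by a common denominator of $w$ to ensure integrality). The LLL algorithm then returns in polynomial time a short vector of this lattice, from which we read off an admissible tuple $(Q, p_2, \dots, p_k)$. The explicit exponents $4k^3$ and $k(k+2)$ are obtained by tracking the standard LLL size estimate $\|v\| \leq 2^{k/2}\det(\Lambda)^{1/k}$ through the lattice determinant, combined with the relationship between $\delta$ and the chosen scaling.

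The heart of the argument is sign preservation. Setting $\overline{w} := \mathrm{sign}(w_1)\cdot(Q, p_2, \dots, p_k)$, for any integer $b$ with $|b|_1 \leq N-1$ one computes
\[
\overline{w}^\top b \;=\; \tfrac{Q}{|w_1|}\, w^\top b \;+\; \epsilon(b), \qquad |\epsilon(b)| \leq \delta(N-1).
\]
The nonzero values $w^\top b$ with $|b|_1 \leq N-1$ are bounded below in absolute value by $1/D$, where $D$ is a common denominator of the entries of $w$. Choosing $\delta$ strictly smaller than $Q/(|w_1|\,D\,(N-1))$ then forces $\mathrm{sign}(\overline{w}^\top b) = \mathrm{sign}(w^\top b)$ whenever the right-hand side is nonzero, while integrality of $\overline{w}^\top b$ combined with the same bound handles the zero case. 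The main obstacle is the simultaneous calibration of $\delta$ and the lattice scaling so that both the sign-preservation guarantee and the explicit bound $|\overline{w}|_\infty \leq 2^{4k^3} N^{k(k+2)}$ are attained at once; this is precisely the delicate bookkeeping carried out in the original paper.
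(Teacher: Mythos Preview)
The paper does not prove this theorem at all: it is quoted verbatim from Frank and Tardos~\cite{MR905151} and used as a black box in the proof of \Cref{thm:diff_weights}. So there is nothing to compare your proposal against on the paper's side, and for the purposes of this paper a one-line citation would suffice.

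That said, your sketch contains a genuine gap if taken as a self-contained argument. The bound $|\overline{w}|_\infty \le 2^{4k^3} N^{k(k+2)}$ depends only on $k$ and $N$, not on the encoding of $w$. In your sign-preservation step you lower-bound the nonzero values $|w^\top b|$ by $1/D$ where $D$ is a common denominator of $w$, and then require $\delta < Q/(|w_1|\,D\,(N-1))$. But the size of $Q$ returned by LLL is governed by the lattice determinant, which you have scaled by (a function of) $1/\delta$; hence $Q$, and therefore $|\overline{w}|_\infty$, ends up depending on $D$. A single pass of simultaneous Diophantine approximation with this naive calibration does not yield a bound independent of the input rationals. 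The actual Frank--Tardos argument avoids this by an iterated rounding scheme: one repeatedly applies simultaneous Diophantine approximation, peels off the resulting integer vector, and recurses on the (shrinking) residual, terminating after $\bigoh(k)$ rounds; it is this iteration that decouples the final bound from $D$. Your closing sentence correctly flags the calibration as ``the main obstacle'', but the sketch as written does not indicate how that obstacle is overcome, and the single-shot approach you describe cannot overcome it.
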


\begin{theorem}
	\label{thm:diff_weights}
	For \plsprob{Subset Weight Optimization Problem}{$c$-Swap} with $k$ different values of weights and for any pivoting rule, the standard local search algorithm takes $g(c,k) \cdot n^{\Oh(1)}$ steps for some computable function $g$. 
\end{theorem}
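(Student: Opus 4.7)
The plan is to invoke the Frank--Tardos weight-reduction theorem (\cref{thm:frank_tardos}) to replace the original rational weights by bounded-magnitude integers while preserving the local-search dynamics exactly, and then use the bounded magnitude to bound the length of any improving sequence directly.

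First I would collect the $k$ distinct weight values into a vector $w \in \Q^k$. For an improving $c$-swap from a solution $S$ to a solution $S'$, the symmetric difference $S \oplus S'$ has size at most $c$, so the change in objective value can be written as $w^\top b$ for an integer vector $b \in \Z^k$ whose $i$th entry is the signed net change in the number of elements of weight $w_i$. Since $|S \oplus S'| \le c$, we have $|b|_1 \le c$. I would then apply \cref{thm:frank_tardos} with $N := c+1$ to obtain an integer vector $\overline{w} \in \Z^k$ with $|\overline{w}|_\infty \le 2^{4k^3}(c+1)^{k(k+2)} =: W$ and $\sign(w^\top b) = \sign(\overline{w}^\top b)$ for every integer vector $b$ with $|b|_1 \le c$.

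Next I would define a new weight function $\omega'$ on the ground set $V \cup E$ by assigning to each element the integer image $\overline{w}_i$ corresponding to its original value $w_i$. By the sign-preservation property, a $c$-swap is improving under $\omega$ if and only if it is improving under $\omega'$; in particular, the transition graph (in the orientation induced by improvement) is identical and the set of local optima is unchanged. Hence running the standard local search algorithm under any pivoting rule on $(G,\omega)$ yields exactly the same sequence of solutions as on $(G,\omega')$.

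Finally, under $\omega'$ every improving step strictly increases the objective by a positive integer, hence by at least $1$. The objective value lies in the integer interval $[-nW, nW]$ with $n = |V| + |E|$, so any sequence of improving $c$-swaps has length at most $2nW + 1 = 2n \cdot 2^{4k^3}(c+1)^{k(k+2)} + 1$, which is of the form $g(c,k) \cdot n^{\Oh(1)}$ as required. The only delicate point is ensuring the Frank--Tardos parameter $N$ is large enough to cover every admissible change vector $b$; this is exactly what the bound $|b|_1 \le c$ guarantees, so no additional work is needed beyond the observation that a $c$-swap can alter the multiplicities of the $k$ weight classes by at most $c$ in total absolute value.
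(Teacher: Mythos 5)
Your proposal is correct and follows essentially the same route as the paper's proof: apply Frank--Tardos with $N = c+1$ to replace the $k$ distinct weights by integers of magnitude $2^{4k^3}(c+1)^{k(k+2)}$, observe that sign preservation over all integer vectors $b$ with $|b|_1 \le c$ leaves the transition graph unchanged, and then bound the number of improving steps by the range of the integer-valued objective. The only additional detail you supply beyond the paper's exposition is spelling out why the relevant change vectors satisfy $|b|_1 \le c$, which the paper implicitly relies on.
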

\begin{proof}	Applying \cref{thm:frank_tardos} with $N = c + 1$ and $w = (w_1, \dots w_k)$ being the vector of the $k$ different weights, we obtain a vector $\overline{w} = (\overline{w}_1, \dots \overline{w}_k)$ such that (i) $|\overline{w}|_{\infty} \leq g(c,k)$ for some computable function $g$ and (ii) $\sign(w^{\top} b) = \sign(\overline{w}^{\top} b)$ for any integer vector $b$ with $|b|_1 \leq c$.
	Let $\I$ be the original instance and $\overline{\I}$ be the instance obtained from $\I$ by replacing any weight with value $w_i$ by $\overline{w}_i$ for $i \in [k]$.
	(i) implies that the absolute value of the objective for $\overline{\I}$ is bounded by $n^{\Oh(1)} g(c,k)$.
	Since the weights are integral, the objective improves by at least 1 at each step.
	Hence, the standard local search algorithm for $\overline{\I}$ should terminate in $n^{\Oh(1)} g(c,k)$ steps for any pivoting rule.
	Next, (ii) implies that the transition graph is the same for both $\I$ and $\overline{\I}$.
	As such, a local optimum for $\overline{\I}$ is also one for $\I$.
	The theorem then follows.
\end{proof}

Since we can find an improving swap in $n^{\Oh(c)}$~time, \cref{thm:diff_weights} immediately implies the following FPT result.

\begin{corollary}
\label{cor:fpt_diff_weights}
	\pivotprob{Subset Weight Optimization Problem}{$c$-Swap} with $k$ different values of weights can be solved in time $g(c,k) \cdot n^{\Oh(c)}$ for some computable function $g$.
\end{corollary}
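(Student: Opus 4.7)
The plan is to combine \cref{thm:diff_weights} with the observation that a single improvement step of the standard local search algorithm can be performed in time $n^{\Oh(c)}$. Indeed, any neighbor of a solution $S$ in the $c$-swap neighborhood is obtained by selecting a subset of $V \cup E$ of size at most~$c$ as the symmetric difference; there are only $\Oh(n^c)$ such subsets, and for each candidate we can test feasibility via~$\sigma$ and compute the objective via~$\omega$ in polynomial time. Hence, in $n^{\Oh(c)}$ time we can either produce a strictly better neighbor of $S$ or certify that $S$ is locally optimal.

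Concretely, I would simply run the standard local search algorithm from the given initial solution under any fixed pivoting rule, writing each visited solution to the output as it is discovered and halting upon reaching a local optimum. \cref{thm:diff_weights} guarantees that the number of iterations is bounded by $g(c,k) \cdot n^{\Oh(1)}$ for some computable function~$g$, independently of the chosen pivoting rule. Multiplying the number of iterations by the $n^{\Oh(c)}$ cost per iteration yields the claimed total running time of $g(c,k) \cdot n^{\Oh(c)}$, where the polynomial overheads of evaluating~$\sigma$ and~$\omega$ are absorbed into the exponent of~$n$.

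There is essentially no technical obstacle beyond bookkeeping, as the non-trivial work is already performed in \cref{thm:diff_weights} via the Frank--Tardos weight reduction of \cref{thm:frank_tardos}. The corollary is thus a direct consequence of that step-count bound combined with the standard $n^{\Oh(c)}$-time subroutine for computing an improving $c$-swap.
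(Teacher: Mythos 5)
Your proposal is correct and matches the paper's own (very brief) proof: both combine the $g(c,k)\cdot n^{\Oh(1)}$ step-count bound from \cref{thm:diff_weights} with the $n^{\Oh(c)}$-time subroutine for finding an improving $c$-swap, then multiply.
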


Note that we cannot presumably replace the~$\Oh(c)$ in the exponent by a constant.
This is due to the fact that it is W[1]-hard~\cite{FellowsFLRSV12} for parameter~$c$ to decide whether a given independent set in an unweighted graph is locally optimal. 
Moreover, as a consequence from the reduction behind~\cite[Theorem 3.7]{KomusiewiczM22}, the problem cannot be solved in $g(c) \cdot n^{o(c)}$~time, unless the ETH fails.

\subsection{\plsprob{Generalized-Circuit}{Flip}}
The same proof techniques in the previous section can also be applied to \pivotprob{Generalized-Circuit}{Flip}.
First, we obtain a similar \XP result when parameterizing the problem by the shortest distance $\ell$.

\begin{theorem}
	\label{thm:xp_circuit}
	An instance of \pivotprob{Generalized-Circuit}{Flip} with $n$ input gates can be solved in time $\Oh(n^{\ell})$.
\end{theorem}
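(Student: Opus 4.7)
The plan is to mimic the proof of Theorem~\ref{thm:xp_subset} essentially verbatim, exploiting the fact that the flip neighborhood is even simpler than the $c$-swap one. Since each input assignment is a length-$n$ bit vector and a flip changes exactly one bit, every solution has precisely $n$ neighbors in the transition graph. Moreover, because \plsprob{Generalized-Circuit}{Flip} is in \pls, evaluating the circuit on a solution and on each of its $n$ flip-neighbors---and thereby either finding an improving neighbor or certifying local optimality---takes time polynomial in the instance size.

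First I would argue the bound on the number of reachable solutions. Starting from the given initial solution $s$, any solution at distance at most $\ell$ in the transition graph is obtained by a sequence of at most $\ell$ improving flips, and at each step there are at most $n$ choices, so at most $n^{\ell}$ solutions are reachable from $s$ by an improving path of length at most $\ell$. Then I would describe the algorithm: perform a depth-first search in the transition graph rooted at $s$, cut off at depth $\ell$, and at each visited solution $s'$ call the \pls-oracle $C_P$ to either (i) return ``locally optimal'', in which case we emit the current DFS path as the desired maximal improving sequence, or (ii) iterate over the (at most $n$) flip-neighbors of $s'$ with strictly better objective value and recurse on each. Correctness follows from the promise that some local optimum is within distance $\ell$, so the DFS is guaranteed to hit one before exhausting its depth budget.

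For the running time, each of the $\Oh(n^{\ell})$ explored nodes incurs only polynomial overhead for the neighborhood enumeration and the circuit evaluations, yielding the claimed $\Oh(n^{\ell})$ bound (absorbing polynomial factors in the same way as Theorem~\ref{thm:xp_subset}). No real obstacle arises here; the only point to be careful about is that we are traversing the transition graph, which is a DAG (edges go to strictly better objective values), so the DFS cannot loop, and we do not even need to memoize visited solutions to guarantee termination within depth $\ell$.
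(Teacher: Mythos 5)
Your proposal is correct and matches the paper's argument essentially verbatim: both bound the out-degree of the transition graph by $n$ and perform a depth-$\ell$ exhaustive search from $s$, absorbing polynomial factors into the $\Oh(n^{\ell})$ bound. The additional remarks (DAG structure, use of the \pls-oracle $C_P$) are correct but not needed beyond what the paper already states.
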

\begin{proof}
	Every solution has at most $n$ neighbors in the transition graph.
	Hence, exhaustive search of all solutions reachable from the initial solution by a directed path of length at most $\ell$ in the transition graph runs in time $\Oh(n^{\ell})$.
\end{proof}

Note that there are only $2^m$ states of the output, and in any directed path in the transition graph, we cannot encounter any state twice.
This implies that the distance from a solution to any reachable local optimum is at most $2^m$ (i.e., $\ell < 2^m$).
While the result above then implies an \XP algorithm when parameterizing by $m$, we can instead immediately obtain an \FPT-algorithm from \cref{ob:fpt_longest}.

\begin{corollary}
	An instance of \pivotprob{Generalized-Circuit}{Flip} with $n$ input gates and $m$ output gates can be solved in time $2^m \poly{n}$.
\end{corollary}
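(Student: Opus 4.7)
The plan is to combine two simple observations. First, I would observe that the objective value $f(x_1,\dots,x_n) = \sum_{i=1}^{m} \omega(i) y_i$ is determined entirely by the output vector $(y_1,\dots,y_m) \in \{0,1\}^m$. In particular, two solutions yielding the same output vector have the same objective value, and hence cannot both appear along a strictly improving sequence.

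Next, along any improving sequence $s_0, s_1, \dots, s_r$ in the transition graph, the objective values satisfy $f(s_0) < f(s_1) < \dots < f(s_r)$. By the previous observation, this means the output vectors produced by $s_0, s_1, \dots, s_r$ are pairwise distinct. Since there are only $2^m$ possible output vectors, this forces $r+1 \leq 2^m$, i.e., any improving sequence (and in particular any maximal improving sequence) from $s$ to a local optimum has length strictly less than $2^m$. Consequently, the longest distance $\ell_{\max}$ from $s$ to a reachable local optimum in the transition graph satisfies $\ell_{\max} < 2^m$.

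Finally, I would invoke \cref{ob:fpt_longest}: since \plsprob{Generalized-Circuit}{Flip} belongs to \pls (a better neighbor, if one exists, can be found in polynomial time by testing each of the $n$ bit flips), the instance $(I,s)$ of its pivoting variant can be solved in $\ell_{\max} \cdot \poly{n}$ time, which is bounded by $2^m \cdot \poly{n}$.

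There is essentially no obstacle here beyond making these two observations precise; the argument is almost a direct combination of the remark preceding the corollary and \cref{ob:fpt_longest}. The only mild subtlety is noting that "distinct objective values along an improving sequence" already implies "distinct output vectors," which is why the bound $2^m$ (rather than something depending on $n$ or the weights) is the right ceiling on sequence length.
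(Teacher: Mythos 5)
Your proof is correct and follows essentially the same reasoning as the paper: the objective depends only on the $2^m$ possible output vectors, so any improving sequence visits pairwise-distinct output states, giving $\ell_{\max} < 2^m$, and then \cref{ob:fpt_longest} yields the $2^m \cdot \poly{n}$ bound.
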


If we parameterize by the number $k$ of distinct weights and an additional parameter $t$ defined in \cref{thm:diff_weights_circuit} below, we again obtain that any pivoting rule yields a fixed-parameter algorithm.

\begin{theorem}
\label{thm:diff_weights_circuit}
Let $\I$ be an instance of \pivotprob{Generalized-Circuit}{Flip} with independently chosen weights among $k$ different values.
Further, let $t$ be the maximum number of output gates connected to any particular input gate. 
Then for any pivoting rule, the standard local search algorithm takes $f(t,k) \cdot n^{\Oh(1)}$ steps.
\end{theorem}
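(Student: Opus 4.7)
The plan is to mirror the argument used in Theorem~\ref{thm:diff_weights}, with the parameter $t$ playing the role that the swap size $c$ plays for \plsprob{Subset Weight Optimization Problem}{$c$-Swap}.

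First I would analyze what a single flip can do to the objective. If we flip input gate $x_j$, then only those output gates that depend on $x_j$ may change their value; by assumption there are at most $t$ such output gates. Hence the change in the objective value caused by any flip has the form $\sum_{i\in S}\varepsilon_i\,\omega(i)$ with $S\subseteq[m]$, $|S|\leq t$, and $\varepsilon_i\in\{-1,+1\}$. Collecting the $k$ distinct weight values into a vector $w\in\Q^k$, this change can be written as $w^{\top}b$ for an integer vector $b\in\Z^k$ whose coordinate $b_q$ counts (with signs) how many of the affected output gates carry the $q$-th weight value; in particular, $|b|_1\le |S|\le t$.

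Next I would apply \cref{thm:frank_tardos} with $N:=t+1$ to obtain an integer vector $\overline{w}\in\Z^k$ with $|\overline{w}|_\infty \leq 2^{4k^3}(t+1)^{k(k+2)}=:h(t,k)$ such that $\sign(w^{\top}b)=\sign(\overline{w}^{\top}b)$ for every integer $b$ with $|b|_1\leq t$. Let $\overline{\I}$ be the instance obtained from $\I$ by replacing every occurrence of the value $w_i$ in the weight function by $\overline{w}_i$. By the preceding paragraph and the sign-preservation, a flip is improving in $\I$ if and only if it is improving in $\overline{\I}$; thus both instances induce exactly the same transition graph, and in particular the same local optima. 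It follows that any pivoting rule produces the same sequence of solutions on $\I$ and on $\overline{\I}$.

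Now I would bound the length of any improving sequence in $\overline{\I}$. Since $\overline{w}$ is integral, each improving flip increases the objective value by at least one; and since every output weight has absolute value at most $h(t,k)$, the objective value lies in the interval $[-m\cdot h(t,k),\,m\cdot h(t,k)]$. Hence any improving sequence has length at most $2m\cdot h(t,k)$. Combined with the fact that a single better neighbor (if any) can be located in polynomial time by evaluating the circuit on each of the $n$ flips, the standard local search algorithm on $\I$ terminates in $f(t,k)\cdot n^{\Oh(1)}$ steps for the function $f(t,k):=2\cdot h(t,k)$, regardless of pivoting rule.

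The only subtle step is the verification that every flip indeed corresponds to an integer vector $b$ with $|b|_1\leq t$; this is the place where the parameter $t$ enters, and without it the Frank--Tardos reduction would not preserve signs of all relevant objective changes. Everything else is routine.
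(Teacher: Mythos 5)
Your proof is correct and takes essentially the same route as the paper's: the paper simply states that the result follows by the same argument as Theorem~\ref{thm:diff_weights} with $c$ replaced by $t$, and your writeup is precisely that argument spelled out---observing that any flip affects at most $t$ output gates so the objective change is $w^{\top}b$ with $|b|_1\le t$, applying Frank--Tardos with $N=t+1$, and then bounding the objective in the rescaled instance to cap the number of improving steps.
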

\begin{proof}
	The proof is analogous to that of \cref{thm:diff_weights}, when we replace $c$ by $t$.
\end{proof}

\section{Concluding Remarks}
\label{sec:concl}

There are numerous avenues for future research. First, one could identify further local search problems for which the pivoting formulation parameterized by the distance~$\ell$ of the starting solution to the closest local optimum is not \FPT.
Conversely, it is open whether there is any \emph{natural} \pls-complete problem whose pivoting formulation admits a fixed-parameter algorithm when parameterized by~$\ell$. 
Observe that such a fixed-parameter algorithm exists for \pls-complete local search problems where the largest neighborhood has constant size or---slightly more generally---the maximum outdegree in the transition graph is constant. 
Such problems do exist, for example one may define an ordering of solutions and redefine the neighborhood to only contain the smallest solution with respect to this ordering.  We are, however, not aware of any natural \pls-complete problem with only constant-size neighborhoods. 
In light of this discussion it seems more meaningful to ask whether there is any \pls-complete problem with an unbounded number of large neighborhoods whose pivoting formulation is \FPT~with respect to~$\ell$.

In addition to~$\ell$, other parameters related to the transition graph could be considered. 
For example, since parameterization by the diameter of the transition graph or the distance to the furthest local optimum trivially yield fixed-parameter algorithms, one could consider parameters that are sandwiched between~$\ell$ and the diameter. By the discussion above, it is also motivated to consider parameter combinations~$(q,\ell)$ where~$q$ is some parameter that is always upper-bounded by the maximum out-degree of the transition graph.

Finally, it is open to obtain any parameterized hardness results for classic (non-pivoting) formulations of \pls~problems. A major obstacle towards such a result is that the existence of any polynomial-time Turing reduction of some \NP-hard problem~$X$ to a problem~$L$ in \pls implies \NP=\coNP~\cite{JPY1988}. This excludes the standard approach of providing a polynomial parameter transformation from \textsc{Multicolored Independent Set} or similar classic problems. In other words, any parameterized reduction from~\textsc{Multicolored Independent Set} or similar problems would need an exponential running time dependence on the parameter~$k$. An alternative, perhaps more promising, approach would be to develop analogs of \XP or \W[1] for the \pls~class and to identify complete problems for such classes. This, however, would not relate the hardness of such problems to existing complexity classes.

\bibliography{refs}

\end{document}